\newcommand{\be}{\begin{equation}}
\newcommand{\ee}{\end{equation}}
\newcommand{\ba}{\begin{array}}
\newcommand{\ea}{\end{array}}
\newcommand{\bea}{\begin{eqnarray}}
\newcommand{\eea}{\end{eqnarray}}
\newcommand{\calB}{{\cal B }}
\newcommand{\calC}{{\cal C }}
\newcommand{\calL}{{\cal L }}
\newcommand{\calP}{{\cal P }}
\newcommand{\calS}{{\cal S }}
\newcommand{\calG}{{\cal G }}
\newcommand{\calQ}{{\cal Q }}
\newcommand{\CC}{\mathbb{C}}
\newcommand{\FF}{\mathbb{F}}
\newcommand{\la}{\langle}
\newcommand{\ra}{\rangle}
\newcommand{\nn}{\nonumber}
\newcommand{\rank}{\mathop{\mathrm{rank}}\nolimits}
\newcommand{\bare}{\mathrm{bare}}
\newcommand{\Crow}{{C_{\mathrm{row}}}}
\newcommand{\Ccol}{{C_{\mathrm{col}}}}
\newcommand{\Grow}{{G_{\mathrm{row}}}}
\newcommand{\Gcol}{{G_{\mathrm{col}}}}
\newcommand{\drow}{{d_{\mathrm{row}}}}
\newcommand{\dcol}{{d_{\mathrm{col}}}}
\newtheorem{dfn}{Definition}
\newtheorem{lemma}{Lemma}
\newtheorem{prop}{Proposition}
\newtheorem{theorem}{Theorem}
\newtheorem{fact}{Fact}
\begin{document}

\title{Subsystem codes with spatially local generators}

\author{Sergey \surname{Bravyi}}
\affiliation{IBM T.J. Watson Research
Center,  Yorktown Heights  NY 10598, USA}

\date{\today}

\begin{abstract}
We study subsystem codes whose gauge group has local generators in the 2D geometry.
It is shown that there exists a family of such codes defined on lattices of size $L\times L$
with the number of logical qubits $k$ and the minimum distance $d$ both proportional to $L$.
 The gauge group
of these codes involves only two-qubit generators of type $XX$ and $ZZ$ coupling nearest neighbor qubits
(and some auxiliary one-qubit generators). Our proof is not constructive as it relies on
a certain version of the Gilbert-Varshamov bound for classical codes. Along the way we introduce and study
properties of generalized Bacon-Shor codes which might be of independent interest.
Secondly, we prove that any   2D subsystem
$[n,k,d]$ code with spatially local generators obeys upper bounds
$kd=O(n)$ and $d^2=O(n)$. The analogous upper bound proved recently for  2D stabilizer codes
is $kd^2=O(n)$. Our results thus demonstrate that subsystem codes can be more powerful than stabilizer codes
under the spatial locality constraint.
\end{abstract}

\maketitle

\section{Introduction}

Fault-tolerant quantum information processing based on 2D topological quantum codes has received a considerable
attention lately since it  can be implemented on  quantum machines with a
geometrically local architecture.
The potential of topological codes as a viable alternative to concatenated quantum codes
was first realized by Dennis et al~\cite{Dennis01}. It was shown in~\cite{Dennis01} that an active error correction
in the 2D toric code permits reliable storage of a logical qubit if the error rate
in the quantum hardware
 is below the threshold
value about $1\%$. The threshold for storage of a qubit
has been recently improved by Andrist et al~\cite{Andrist10}
by using  topological color codes~\cite{BMD:topo,KBAM:unionj} instead of the toric code. The success of topological codes was extended from a storage of a qubit to the
universal quantum computation by making use of the powerful technique known as
code deformations~\cite{Raussendorf06,Raussendorf07,Bombin09+,DiVin09}.
These recent developments demonstrate that topological codes provide an attractive framework for design of new fault-tolerant protocols.

In order to better understand the potential of topological codes for storing and manipulating quantum information,
it is desirable to derive fundamental bounds on the parameters of quantum codes that stem from the spatial locality constraint
and find families of codes that achieve these bounds.
A progress in this direction has been recently made for 2D stabilizer codes~\cite{BT08,KC08,Bravyi09,Yoshida10}.
Such codes can be defined in a system of  $n$ physical qubits occupying sites of a regular
square lattice of size $\sqrt{n}\times \sqrt{n}$.
Quantum information is encoded into a codespace $\calL$ spanned
by common eigenvectors of pairwise commuting $n$-qubit Pauli operators $S_1,\ldots,S_m$
known as {\em stabilizers}, that is,
\[
\calL=\{ |\psi\ra \in (\CC^2)^{\otimes n}\, : \,S_a\, |\psi\ra =|\psi\ra \quad \mbox{for all $a$}\}.
\]
The locality condition is imposed by demanding that each stabilizer $S_a$ acts non-trivially only on a constant number of qubits
located within constant distance from each other. A code has $k$ logical qubits if $\dim{\calL}=2^k$. It was shown in~\cite{Bravyi09} that any 2D stabilizer code obeys a bound
\be
\label{bound1}
kd^2=O(n),
\ee
where $d$
is the minimum distance of a code, i.e., the minimum weight of a Pauli operator
commuting with all stabilizers and implementing a non-trivial transformation on $\calL$.
The bound Eq.~(\ref{bound1}) is tight in the sense that for any given $k$ and $d$
one can construct a 2D stabilizer $[n,k,d]$ code with $n=O(kd^2)$,
see~\cite{Bravyi09} for details.

The main conclusion of the present paper is that the bound Eq.~(\ref{bound1}) can
be violated in a dramatic way for 2D {\em subsystem} codes.
Recall that a subsystem code~\cite{Bacon06,Poulin06} can be regarded as an ordinary stabilizer
code in which only part of the logical qubits is used to store information.
Accordingly, the codespace of subsystem codes can be decomposed as
\[
\calL=\calL_{\mathrm{logical}}\otimes \calL_{\mathrm{gauge}}, \quad \dim{\calL_{\mathrm{logical}}}=2^k,
\]
where  $\calL_{\mathrm{logical}}$ is the logical subsystem used to store quantum information, while
$\calL_{\mathrm{gauge}}$ represents the unused logical qubits
usually called gauge qubits.
The distance $d$ of a subsystem code is the minimum weight of a Pauli operator
commuting with all stabilizers and acting non-trivially on the logical subsystem $\calL_{\mathrm{logical}}$.
The presence of the unused gauge qubits
provides much more flexibility in the design of fault-tolerant gates and the error correction
for subsystem codes
since one does not need to worry how a particular computational operation or an error affects the gauge qubits,
see for instance~\cite{Aliferis07,Cross07}.
By the same token,  one should expect that spatial locality constraints lead to  less severe
restrictions on the parameters of subsystem codes.

One can characterize a subsystem code by its {\em gauge group} $\calG$
that includes all stabilizers and all logical operators on the unused logical qubits,
see Section~\ref{sec:stabilizer} for more details.
We shall study 2D subsystem codes for which the gauge group has spatially local generators,
that is, $\calG=\la G_1,\ldots,G_m\ra$ where
each generator $G_a$ acts non-trivially only on a constant number of qubits
located within constant distance from each other (as for stabilizers $S_a$, they may or may not
be spatially local).
For such codes the tradeoff between $n$ and $d$ was characterized in~\cite{BT08}
by showing that
\be
\label{old_bound}
d=O(\sqrt{n}).
\ee
This bound is tight since the 2D Bacon-Shor code~\cite{Bacon06} has parameters $d=\sqrt{n}$ and $k=1$.
The question that remained open is whether 2D subsystem codes may have
a better scaling of $k$ compared with 2D stabilizer codes.

In the present paper we answer this question in positive  by proving that there exist a family of 2D subsystem $[n,k,d]$
codes with both $k$ and $d$ proportional to $\sqrt{n}$.
More precisely, we prove the following.
\begin{theorem}
\label{thm:main}
Let $\alpha>0$ and $0<\beta<1/2$ be any constants such that
$\alpha + H_2(\beta)<1$, where $H_2(\beta)$ is the binary entropy.
Then for all sufficiently large integers $m$ and for all $k\le \alpha m$ there exists
a 2D subsystem $[2m^2,k,d]$ code for some $d\ge \beta m$. The gauge group
of this code has two-qubit generators of type $XX$ and $ZZ$ coupling
nearest-neighbor qubits and some one-qubit generators.
\end{theorem}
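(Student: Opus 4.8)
The plan is to realize the desired family as \emph{generalized Bacon--Shor codes}: 2D subsystem codes whose gauge group $\calG$ is generated by nearest-neighbor $XX$ and $ZZ$ terms together with some single-qubit terms, and which are specified by a pair of classical binary codes, a row code $\Crow$ and a column code $\Ccol$. I would first develop these codes in general and prove that the number of logical qubits $k$ and the distance $d$ of the resulting quantum code are governed by the dimensions and minimum distances $\drow,\dcol$ of the underlying classical codes. The required parameters $k\le\alpha m$ and $d\ge\beta m$ would then follow by feeding in classical codes whose existence is guaranteed by the Gilbert--Varshamov bound, whose threshold is exactly the hypothesis $\alpha+H_2(\beta)<1$.

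Concretely, I would lay out the $2m^2$ qubits on two $m\times m$ sublattices and take $\Ccol$ to be the length-$m$ repetition code: its codewords run down the columns and are generated by nearest-neighbor vertical $ZZ$ terms, reproducing the column structure of the ordinary Bacon--Shor code, so $\dcol=m$. The row code $\Crow$ is an arbitrary $[m,k,\drow]$ code. Since nearest-neighbor horizontal $XX$ terms by themselves generate only the repetition code along each row, the nontrivial structure of $\Crow$ (and the generator matrices $\Grow,\Gcol$) has to be imposed indirectly: I would use the single-qubit generators, equivalently a suitably chosen set of purely gauge qubits, to mask selected sites so that the induced stabilizer group and the admissible logical column-patterns reproduce exactly $\Crow$. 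Identifying the stabilizer group as the center $\calS=Z(\calG)$ and computing the logical content $\calG/\calS$ would then yield $k=\dim\Crow$.

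The distance analysis is the crux. A logical operator is a minimum-weight element commuting with $\calS$ but lying outside $\calG$; because $\calG$ is nonabelian one must minimize weight over the entire coset of gauge-equivalent (\emph{dressed}) representatives rather than over a fixed bare representative. I would show that, after the unavoidable row- and column-parity constraints are accounted for, any such operator restricts to a nonzero codeword of $\Crow$ or of $\Ccol$ along some line of the lattice, so its weight is at least $\min(\drow,\dcol)$; with $\Ccol$ the repetition code this equals $\min(\drow,m)=\drow$. Finally, the Gilbert--Varshamov bound guarantees, for every $k\le\alpha m$ and all sufficiently large $m$, a linear code $\Crow$ of length $m$, dimension $k$, and minimum distance $\drow\ge\beta m$; the hypothesis $\alpha+H_2(\beta)<1$ ensures $k/m+H_2(\beta)<1$, which is precisely the GV feasibility condition. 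The resulting quantum code then has $k$ logical qubits and distance $d\ge\beta m$, as claimed.

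The main obstacle is twofold. First, the distance lower bound in the presence of gauge freedom is delicate: one must rule out cheap dressed logical operators and verify that the classical distances $\drow,\dcol$ are genuinely protective, which is where the bare-versus-dressed distinction does all the work. Second, one must reconcile the \emph{arbitrariness} of $\Crow$ with the requirement that every generator be a nearest-neighbor $XX$ or $ZZ$ term or a single-qubit term; it is exactly this tension that forces the nonconstructive, Gilbert--Varshamov flavor of the argument, and the single-qubit generators together with the doubling to $2m^2$ qubits are what make the accommodation possible.
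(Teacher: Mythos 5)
Your overall architecture is the right one --- a generalized Bacon--Shor code determined by classical data, single-qubit gauge generators that ``mask'' sites, ancillas to break long-range couplings into nearest-neighbor $XX$/$ZZ$ chains, and a Gilbert--Varshamov existence argument --- and your emphasis on the dressed-versus-bare distinction in the distance analysis is well placed. But the central construction, as you describe it, has a genuine gap. You propose to take $\Ccol$ to be the length-$m$ repetition code and $\Crow$ an arbitrary $[m,k,\drow]$ code, treating the two as independent choices. In the masking framework they are not independent: masking a set of sites is the same as choosing a binary matrix $A$ (with $A_{i,j}=1$ at unmasked sites), and $\Crow$, $\Ccol$ are then forced to be the row space and column space of that \emph{single} matrix. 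They therefore have the same dimension, namely $\rank(A)$, and this common dimension is the number of logical qubits $k$. Demanding that $\Ccol$ be the one-dimensional repetition code forces every column of $A$ to be all-zeros or all-ones, hence $\rank(A)=1$ and $k=1$ --- you recover only the ordinary Bacon--Shor code. If instead you intend the Bacon--Casaccino product construction (two genuinely independent classical codes), then the $X$-type gauge generators are the parity checks of $\Crow$ replicated in every row; for a GV-achieving code these have linear weight, and replacing such a generator by two-qubit $XX$ terms on its support strictly enlarges the gauge group and destroys the code, so the locality requirement cannot be met that way.

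The missing ingredient is precisely what makes the actual argument nontrivial: one needs a single $m\times m$ binary matrix $A$ of rank $k\le\alpha m$ whose row space \emph{and} column space simultaneously have minimum distance at least $\beta m$. The standard Gilbert--Varshamov bound for one code does not give this; one needs a counting argument over the set of rank-$k$ matrices, using the fact that the number of such matrices whose column space contains a fixed nonzero vector is independent of that vector, and then a union bound over low-weight vectors in either the row or the column space. The condition $\alpha+H_2(\beta)<1$ is exactly the threshold at which this double union bound leaves room for a good matrix. With such an $A$ in hand, the distance statement is $d=\min(\drow,\dcol)$: a minimum-weight dressed $X$-type logical operator built from columns indexed by $x\notin\mathrm{Ker}(A)$ has weight equal to the Hamming weight of $Ax$ (one surviving $X$ in each row where the parity is odd), which ranges over the nonzero vectors of $\Ccol$, and symmetrically for $Z$-type and $\Crow$. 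Your remaining steps --- the CNOT-conjugation argument showing that appending an ancilla $a$ with gauge generators $X_qX_a$, $Z_a$ preserves $k$ and $d$, and the count of at most two qubits per cell giving $n\le 2m^2$ --- then go through as you envisage.
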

In contrast, all previously known 2D subsystem codes with the distance proportional to
$\sqrt{n}$,
such as the 2D Bacon-Shor code~\cite{Bacon06} or the topological subsystem codes~\cite{Bombin09}
encode only $k=O(1)$ qubits.

The proof  of Theorem~\ref{thm:main} relies on generalized
Bacon-Shor codes that we  introduce in Section~\ref{sec:GBS}. One can define a generalized Bacon-Shor code
for any binary matrix $A$ by placing physical qubits at the cells of $A$ for which $A_{i,j}=1$
and leaving the remaining cells empty. The gauge group $\calG$ has generators of two types:
each pair of qubits $c,c'$ located in the same row of $A$ contributes a generator $X_c X_{c'}$
and each pair of qubits $c,c'$ located in the same column of $A$ contributes a generator $Z_c Z_{c'}$.
We show that the code $\calG$ has parameters $[n,k,d]$, where $n$ is the number of non-zero matrix elements in $A$,
$k$ is the binary rank of $A$, while $d$ is determined by the minimum distance
of the classical codes spanned by columns and rows of $A$,
see Theorem~\ref{thm:GBS} in Section~\ref{sec:GBS}. Then we employ the Gilbert-Varshamov bound to prove existence of binary matrices
with the desired properties, see Theorem~\ref{thm:GV} in Section~\ref{sec:GV}. Finally, we show how to transform any generalized
Bacon-Shor code into the spatially local form by introducing ancillary qubits and simulating each long-range generator
by a chain of nearest-neighbor couplings, see Section~\ref{sec:local}.

Our second result is a new upper bound on the parameters of 2D subsystem codes
whose gauge group has spatially local generators, namely,
\be
\label{bound2}
kd=O(n).
\ee
It can be regarded as a generalization of Eq.~(\ref{bound1}) to subsystem codes.
This bound is tight up to a constant factor since one can achieve a scaling $k\sim d\sim \sqrt{n}$,
see Theorem~\ref{thm:main}.
We also prove that the original bound Eq.~(\ref{bound1}) holds for any 2D subsystem code
in which both stabilizer group and the gauge group have spatially local generators.
The topological subsystem codes of~\cite{Bombin09,Bombin10} provide an example of such codes.

The proof of the bound Eq.~(\ref{bound2}) presented in Sections~\ref{sec:tech},\ref{sec:holo},\ref{sec:proof}
requires some heavy machinery that builds upon
techniques developed in~\cite{BT08,Bravyi09,Yoshida10}.
Our first tool is  the identity relating the number of logical operators supported in two complementary
regions of a lattice, see Lemma~\ref{lemma:cleaning} in Section~\ref{sec:tech}. It was originally proved by Yoshida and Chuang~\cite{Yoshida10} for stabilizer codes. In the present paper we generalize this identity to subsystem codes
using techniques of~\cite{BT08}.
Our second tool is what we call a holographic principle for error correction, see Section~\ref{sec:holo}.
It asserts that a non-trivial logical operator cannot be supported in a region whose {\em perimeter} is smaller than the
distance of the code. The analogous result was proved in~\cite{Bravyi09} for 2D stabilizer codes
although the proof given in~\cite{Bravyi09} cannot be generalized to subsystem codes.
In Section~\ref{sec:proof} we combine these technical tools to prove the upper bound Eq.~(\ref{bound2}).

In Section~\ref{sec:concl} we summarize our results and discuss some open problems such as possible extensions
of our constructions to 3D subsystem codes.

\section{Stabilizer and subsystem codes}
\label{sec:stabilizer}
The purpose of this section is to summarize the necessary facts pertaining to stabilizer
and subsystem codes.
The main idea of stabilizer codes is to encode $k$ logical qubits into $n$ physical qubits using a
codespace $\calL\subseteq (\CC^2)^{\otimes n}$ spanned by
states $|\psi\ra$ that are invariant under the action of a {\it stabilizer group} $\calS$,
\[
\calL=\{|\psi\ra\in (\CC^2)^{\otimes n}\, : \, P\, |\psi\ra=
|\psi\ra\quad \forall P\in \calS\}.
\]
All stabilizers $P\in \calS$ must be  Pauli operators, that is, $n$-fold tensor products of the
single-qubit Pauli operators
\[
X=\left[ \ba{cc} 0 & 1 \\ 1 & 0 \\ \ea \right],
\quad
Y=\left[ \ba{cc} 0 & -i \\ i & 0 \\ \ea \right],
\quad
Z= \left[ \ba{cc} 1 & 0 \\ 0 & -1 \\ \ea \right],
\]
and the identity operators $I$.
Such tensor products generate the Pauli group on $n$ qubits,
\[
\calP=\{ \gamma\, P_1\otimes P_2\otimes \cdots \otimes P_n\},
\]
where $P_a\in \{I,X,Y,Z\}$ and $\gamma\in \{1,-1,i,-i\}$ is an overall phase factor that we shall often ignore.
Non-trivial stabilizer codes correspond to Abelian stabilizer groups $\calS\subset \calP$
such that $-I\notin \calS$.

Logical operators of a stabilizer code $\calS$ are Pauli operators preserving the codespace $\calL$.
Equivalently, logical operators are Pauli operators commuting with every element of $\calS$.
Such operators generate the centralizer of $\calS$ in the Pauli group,
\[
\calC(\calS)=\{ P\in \calP\, : \, PQ=QP \quad \forall \; Q\in \calS\}.
\]
One can always decompose the centralizer  as
\[
\calC(\calS)=\la \calS, \overline{X}_1,\overline{Z}_1,\ldots,\overline{X}_k,\overline{Z}_k\ra,
\]
where $\overline{X}_i,\overline{Z}_i$ are the logical Pauli operators, while elements of $\calS$
correspond to the logical identity operator. Here and below we use the notation $\la \ldots \ra$
for a subgroup generated by a family of operators.

For any Pauli operator $P\in \calP$ let $\mathrm{Supp}(P)\subseteq \Lambda$
be the {\em support} of $P$, that is, the subset of qubits on which $P$ acts non-trivially (by $X$, $Y$, or $Z$).
We shall use the notation $|P|=|\mathrm{Supp}(P)|$ for the weight of $P$, that is, the number of qubits in its support.

The minimum distance $d$ of a stabilizer code $\calS$ is defined as the minimum weight
of a non-trivial logical operator, that is,
\[
d=\min_{P\in \calC(\calS)\backslash \calS}\; |P|.
\]
We shall use a notation $[n,k,d]$ for a stabilizer code encoding $k$ logical qubits
into $n$ physical qubits with the distance $d$.

To define a subsystem  $[n,k,d]$ code it is convenient to start from a  stabilizer
code  $\calS$ with $k+g$ logical qubits for some $g>0$. Let $\overline{X}_i,\overline{Z}_i$, $i=1,\ldots,k$ be the logical
Pauli operators on the first $k$ logical qubits that will be used to encode information. The remaining unused logical operators
$\overline{X}_i,\overline{Z}_i$, $i=k+1,\ldots,k+g$, together with stabilizers $\calS$ generate the {\em gauge group}~\cite{Poulin06}
\[
\calG=\la \calS,\overline{X}_{k+1},\overline{Z}_{k+1},\ldots,\overline{X}_{k+g},\overline{Z}_{k+g}\ra.
\]
We will assume that the $n$ physical qubits live at vertices of a regular
2D lattice $\Lambda$ of size $\sqrt{n}\times \sqrt{n}$ with open or
periodic boundary conditions.
Given this 2D geometry we demand that  the gauge group $\calG$ must have spatially local generators,
that is, $\calG=\la G_1,\ldots,G_m\ra$, where the support of any generator $G_m$
can be bounded by a square box of size $r\times r$ for some constant interaction range $r$.
Given the gauge group $\calG$, one can compute the stabilizer group $\calS$ using the identity
\[
\calS=\calG\cap \calC(\calG),
\]
where $\calC(\calG)$ is the centralizer of $\calG$ in the Pauli group.
Note that $\calS$ may or may not have spatially local generators.
For example, any stabilizer of the 2D Bacon-Shor code~\cite{Bacon06} has weight
at least $\sqrt{n}$. On the other hand, for the topological subsystem codes~\cite{Bombin09}
generators of $\calS$ have geometry of closed loops of constant size. In general,
one can think of generators of $\calG$ as stabilizers broken into ``local chunks" such that an eigenvalue
of any stabilizer can be inferred by measuring eigenvalue of sufficiently many generators of $\calG$.
Subsystem codes with an Abelian gauge group are equivalent to ordinary stabilizer codes (such codes
have no gauge qubits and thus $\calG=\calS$).

Logical operators of a subsystem codes are Pauli operators preserving the codespace $\calL$.
Equivalently, logical operators are elements of the centralizer
\[
\calC(\calS)=\la \calG, \overline{X}_1,\overline{Z}_1,\ldots,\overline{X}_k,\overline{Z}_k\ra.
\]
Since the encoding is defined only modulo gauge operators, non-trivial logical operators
are elements of $\calC(\calS)$ that are not in $\calG$.
In the case of subsystem codes we shall often use the term {\em bare logical operators}
which refers to elements of $\calC(\calG)\backslash \calG$.
Bare logical operators preserve the codespace $\calL$ and act trivially on the gauge qubits.
They should not be confused with
{\em dressed logical operators}  which are elements of $\calC(\calS)\backslash \calG$. The identity
\[
\calC(\calS)=\calC(\calG)\cdot \calG
\]
implies that any dressed logical operator can be represented as a product of a bare logical operator
and a gauge operator.
The minimum distance of a subsystem code is defined as the minimum weight of a non-trivial dressed logical operator,
\[
d=\min_{P\in \calC(\calS)\backslash \calG}\, |P| = \min_{P\in \calC(\calG)\backslash \calG}\; \min_{G\in \calG}
\; |PG|.
\]

\section{Generalized Bacon-Shor codes}
\label{sec:GBS}

In this section we introduce a generalization of the 2D Bacon-Shor code~\cite{Bacon06}
and describe its main properties. For the sake of clarity we shall ignore the issue of
spatial locality until Section~\ref{sec:local}.

Let $A$ be an arbitrary matrix of size $m\times m$ with entries $0$ and $1$.
We shall label cells of $A$ by pairs of indices $c=(i,j)$.
Let $n=|A|$ be the  Hamming weight of $A$, i.e., the total number of non-zero matrix elements.
To define a subsystem code associated with $A$ let us place a physical qubit at each
cell $c=(i,j)$ with $A_{i,j}=1$. Hence there are totally $n$ physical qubits.
The remaining cells for which $A_{i,j}=0$ are kept solely for illustrative purposes
since there are no qubits located at these cells.
A subsystem code associated with $A$ has a gauge group
$\calG$ generated  according to the following rules:
\begin{itemize}
\item Every pair of qubits $c,c'$ located in the same row of $A$ contributes a
generator $X_c X_{c'}$
\item Every pair of qubits $c,c'$ located in the same column of $A$ contributes a
generator $Z_c Z_{c'}$
\end{itemize}
In the special case when $A_{i,j}=1$ for all $(i,j)$ the code $\calG$ coincides with the
standard 2D Bacon-Shor code~\cite{Bacon06}.
Consider as an example a binary matrix
\[
A=\left[ \ba{ccc}
1 & 1 & 0 \\
0 & 1 & 1 \\
1 & 0 & 1\\
\ea
\right].
\]
The corresponding subsystem code has $n=6$ physical qubits. The corresponding gauge group
$\calG$ has generators
\bea
\calG&=&\la X_{1,1} X_{1,2}, \; X_{2,2} X_{2,3}, \; X_{3,1} X_{3,3}, \nn \\
&& Z_{1,1} Z_{3,1},\; Z_{1,2} Z_{2,2},\; Z_{2,3} Z_{3,3}\ra. \nn
\eea
Let us explain how to  relate the parameters of the code $\calG$ to certain algebraic properties of the matrix $A$.
It will be convenient to introduce a linear subspace $\Ccol\subseteq \{0,1\}^m$ spanned
by the columns of $A$ and a linear subspace $\Crow\subseteq \{0,1\}^m$ spanned by the rows of $A$
(throughout this section all linear subspaces are defined over the binary field $\FF_2$).
One can regard $\Ccol$ and $\Crow$ as classical codes
encoding $k=\rank{(A)}$ bits into $m$ bits.
Here $\rank{(A)}$ is the rank of $A$ over the binary field $\FF_2$.
 In the example of the 2D Bacon-Shor code the matrix $A$ has rank $1$ and
 $\Ccol$, $\Crow$ are one-dimensional subspaces that include only the all-zeros and the all-ones vectors.
 Let $\dcol$ and $\drow$ be the minimum Hamming weight of non-zero vectors
in $\Ccol$ and $\Crow$ respectively. In other words,
$\dcol$ and $\drow$ is the minimum distance of the classical code $\Ccol$ and $\Crow$ respectively.
\begin{theorem}
\label{thm:GBS}
Let $A$ be an arbitrary binary matrix and $\calG$ be the subsystem
code associated with $A$ as described above. Then $\calG$ encodes
$k=\rank{(A)}$ qubits into $n=|A|$ qubits with the minimum distance
$d=\min{(\drow,\dcol)}$.
\end{theorem}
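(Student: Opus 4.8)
The plan is to pass to the symplectic (Pauli) picture, writing each operator as $X^aZ^b$ with $a,b\in\FF_2^n$ indexed by the qubits $c=(i,j)$, and to translate the three nontrivial claims into linear algebra over $\Ccol$ and $\Crow$; the count $n=|A|$ is immediate. First I would record the group structure of $\calG$. Because the generators are either pure $X$ (within rows) or pure $Z$ (within columns), modulo phases $\calG$ is a direct sum of an $X$-sector $\calG_X$ and a $Z$-sector $\calG_Z$, and a pure-$X$ operator lies in $\calG$ exactly when its support has even weight in every row (dually for pure-$Z$ operators and columns). A one-line commutation check then shows that $X^aZ^b\in\calC(\calG)$ iff $a$ is constant along each column and $b$ along each row, so these operators are parametrized by a column vector $x$ (with $a_{(i,j)}=x_j$) and a row vector $y$ (with $b_{(i,j)}=y_i$). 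The key bookkeeping fact is that the column-type operator attached to $x$ has, in row $i$, support of weight $\equiv(Ax)_i\pmod 2$, and dually the row-type operator attached to $y$ has, in column $j$, weight $\equiv(A^Ty)_j\pmod 2$; this links the pure-$X$ centralizer to $\Ccol=\{Ax\}$ and the pure-$Z$ centralizer to $\Crow=\{A^Ty\}$.

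To get $k=\rank(A)$ I would note that a column-type operator lies in $\calG$ precisely when every row weight is even, i.e.\ when $Ax=0$; since such an operator already lies in $\calC(\calG)$, quotienting the pure-$X$ centralizer by the gauge group gives a group isomorphic to the image $\{Ax\}=\Ccol$, of dimension $\rank(A)$, hence $\rank(A)$ independent logical $X$ operators. The transpose argument yields $\rank(A)$ independent logical $Z$ operators via $\Crow$, and since the two sectors are symplectically independent and together generate $\calC(\calG)$, the logical group $\calC(\calG)/\calS$ has dimension $2\rank(A)=2k$.

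For the distance I would treat the two bounds separately, using the formula $d=\min_{P\in\calC(\calG)\backslash\calG}\min_{G\in\calG}|PG|$ from Section~\ref{sec:stabilizer}. The upper bound is by example: starting from a minimum-weight nonzero codeword $Ax\in\Ccol$, the column-type operator attached to $x$ can be multiplied by $X$-gauge generators to erase, row by row, every even block of its support, leaving one qubit in each of the $\dcol$ rows with $(Ax)_i=1$; this is a nontrivial weight-$\dcol$ dressed operator, and symmetrically $\Crow$ gives weight $\drow$. For the lower bound I would take an arbitrary bare logical operator $P\notin\calG$ (parametrized by $x,y$) and any $G\in\calG$; nontriviality forces $Ax\ne0$ or $A^Ty\ne0$. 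Since the $X$-part of every $G\in\calG$ has even weight in each row, the $X$-part of $PG$ has weight $\equiv(Ax)_i\pmod 2$ in row $i$, so it is nonzero in each of the $\ge\dcol$ rows with $(Ax)_i=1$, contributing a distinct qubit per row; thus $|PG|\ge\dcol$ when $Ax\ne0$, and dually $|PG|\ge\drow$ when $A^Ty\ne0$. Either way $|PG|\ge\min(\drow,\dcol)$.

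The main obstacle is this lower bound: one must exclude the possibility that multiplying a bare logical operator by a shrewdly chosen gauge element---possibly mixing $X$ and $Z$ and producing $Y$'s---pushes the weight below the classical distances. The two parity invariants, row parity preserved by $\calG_X$ and column parity preserved by $\calG_Z$, are exactly what forbid this, and recognizing that they are governed by $\Ccol$ and $\Crow$ is the crux. By contrast, the values of $n$ and $k$ and the distance upper bound are routine once the $X/Z$ splitting of $\calG$ and the description of $\calC(\calG)$ are established.
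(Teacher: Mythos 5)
Your proposal is correct and follows essentially the same route as the paper: it identifies $\calC(\calG)$ with the row/column operators, reads off stabilizers from $\mathrm{Ker}(A)$ and $\mathrm{Ker}(A^T)$, and computes the distance from the row/column parities of a dressed operator, which are governed by $Ax$ and $A^Ty$. Your explicit remark that the $X$-part of $PG$ retains odd weight in every row with $(Ax)_i=1$ even when $G$ mixes $X$ and $Z$ is the same parity invariant the paper uses, just stated a bit more carefully.
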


Before we proceed with the proof of the theorem
let us make several remarks concerning the definition of $\calG$.
Firstly,  the set of generators of $\calG$ introduced above might be overcomplete.
For example, suppose $c,c',c''$ is a triple of consecutive qubits that belong to the same row of $A$.
Then clearly $X_c X_{c''}=(X_c X_{c'})(X_{c'} X_{c''})$, that is, it suffices to retain
only the generators $X_c X_{c'}$ and $X_{c'} X_{c''}$.
In general, it suffices to retain  generators $X_c X_{c'}$ and $Z_c Z_{c'}$ for consecutive pairs of qubits $c,c'$ that
belong to the same row and the same column of $A$ respectively.
 Secondly, the generators of $\calG$ are not necessarily spatially local.
For example, if some row of $A$ contains an isolated `1' separated by a long string of zeros on the
left and on the right, there is no choice of spatially local generators for $\calG$.
We shall explain how to circumvent with difficulty in Section~\ref{sec:local} by placing two ancillary qubits into each
cell of $A$ with $A_{i,j}=0$ and splitting each long-range generator into a chain of short-range generators.
Finally, let us point out that a similar but technically different construction of subsystem codes
was described by Bacon and Casaccino in~\cite{BC06}. The construction of~\cite{BC06}
starts from a pair of classical linear codes $C_1=[n_1,k_1,d_1]$ and $C_2=[n_2,k_2,d_2]$.
A quantum subsystem code is then defined by placing a physical qubit at {\em every} cell of a
matrix $A$ of size $n_1\times n_2$. The $X$-part of the gauge group is defined by replicating the parity checks of $C_1$  in every column of $A$ (in the $X$-basis). Similarly, the $Z$-part of the gauge group
is defined by replicating the parity checks of $C_2$  in every row of $A$ (in the $Z$-basis).
The resulting subsystem code has parameters $[n_1n_2,k_1k_2,\min{(d_1,d_2)}]$.
The main difference between our construction and the one of~\cite{BC06}  is that we allow to use different classical codes
in different rows and columns of $A$ (although each of these codes is simply the repetition code on some subset of
qubits). Also, as one can see from Theorem~\ref{thm:GBS}, the two constructions result in quantum codes
with different parameters.

Using  Theorem~\ref{thm:GBS} one can easily get upper bounds on the number of logical qubits $k$ and
the distance $d$
of generalized Bacon-Shor codes. Firstly, we claim that for any binary matrix $A$
one has  $\drow \dcol\le n$.
Indeed, since any column of $A$ has  weight at least $\dcol$,
the matrix $A$ must contain at least $\dcol$ non-zero rows. Each  of these rows must have
weight at least $\drow$. Hence the number of non-zero matrix elements in $A$
is at least $\drow \dcol$. Applying Theorem~\ref{thm:GBS} one arrives at
\be
\label{GBSbound1}
d^2\le \drow \dcol\le  n.
\ee
Although it is not necessary, let us point out that the bound Eq.~(\ref{GBSbound1}) is not as good as it could be.
In Appendix~A we shall prove a slightly stronger bound
\be
\label{GBSbound1+}
2\drow \dcol (1-2^{-k})\le n
\ee
and construct a family of binary matrices that achieves this bound.

Furthermore, since  $A$ must contain at least $k$ non-zero rows and at least $k$ non-zero columns,
we conclude that $k\drow\le n$ and $k\dcol\le n$. Theorem~\ref{thm:GBS} then implies that
\be
\label{GBSbound2}
kd\le n.
\ee
In Section~\ref{sec:GV} we shall use Gilbert-Varshamov bound to prove that there exists
a family of codes that achieves the bounds Eqs.~(\ref{GBSbound1},\ref{GBSbound2})
up to a constant factor
asymptotically in the limit $n\to \infty$. The corresponding binary $m\times m$ matrices
$A$ are defined for all sufficiently large $m$ and obey the scaling $\rank{(A)}\ge \alpha m$,
and $\dcol,\drow\ge \beta m$ for some constants $\alpha,\beta>0$.
It leads to the scaling $k\ge \alpha m$, $d\ge \beta m$, and $\beta^2 m^2 \le n\le m^2$.

\begin{proof}[\bf Proof of Theorem~\ref{thm:GBS}]
We shall use notations $X_{i,j}$ and $Z_{i,j}$ for the Pauli operators
acting on a qubit located at a cell $(i,j)$.
Let us begin by describing the centralizer $\calC(\calG)$. For any row $i$ define a row operator
$R_i$ acting by $Z$ on every qubit located in the $i$-th row:
\[
R_i=\prod_{j\, : \, A_{i,j}=1} Z_{i,j}.
\]
Similarly, for any column $j$ define a column operator $C_j$
acting by $X$ on every qubit located in the $j$-th column:
\[
C_j=\prod_{i\, :\, A_{i,j}=1} X_{i,j}.
\]
\begin{prop}
\label{prop:C(G)}
The centralizer $\calC(\calG)$ is generated by the row and column operators,
\be
\label{C(G)}
\calC(\calG)=\la R_1,\ldots,R_m,C_1,\ldots,C_m\ra.
\ee
\end{prop}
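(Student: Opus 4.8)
The plan is to establish the two inclusions separately. The easy direction is $\la R_1,\ldots,R_m,C_1,\ldots,C_m\ra \subseteq \calC(\calG)$: I would verify that each row operator $R_i$ and each column operator $C_j$ commutes with every generator of $\calG$. A generator $X_c X_{c'}$ lives in a single row, say row $i'$, and commutes with $R_i$ because for $i\neq i'$ the supports are disjoint, while for $i=i'$ both $X_c$ and $X_{c'}$ anticommute with the corresponding $Z$ factors of $R_i$, giving two sign flips that cancel. It commutes trivially with every $C_j$ since $X$'s commute with $X$'s. The argument for $Z_c Z_{c'}$ is completely symmetric. This is routine, so the real content is the reverse inclusion.

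For $\calC(\calG)\subseteq \la R_i, C_j\ra$, I would take an arbitrary Pauli operator $P\in\calC(\calG)$ and write it, up to phase, as $P = X(\mathbf{a})\, Z(\mathbf{b})$ where $\mathbf{a},\mathbf{b}\in\FF_2^n$ record the $X$- and $Z$-components on each qubit $c=(i,j)$. The commutation condition with $\calG$ decouples: commuting with all the $ZZ$ column generators constrains the $X$-part $\mathbf{a}$, and commuting with all the $XX$ row generators constrains the $Z$-part $\mathbf{b}$. Concretely, $P$ commutes with a generator $Z_c Z_{c'}$ (with $c,c'$ in the same column) iff $a_c + a_{c'} = 0$, i.e.\ iff the $X$-component is constant along that column. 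Since this must hold for every pair in every column, $\mathbf{a}$ must be constant on each column of $A$; that is exactly the statement that the $X$-part of $P$ is a product of full column operators $C_j$ over some subset of columns. Dually, commuting with all row generators $X_c X_{c'}$ forces $\mathbf{b}$ to be constant along each row, so the $Z$-part of $P$ is a product of row operators $R_i$. Hence $P\in\la R_i,C_j\ra$ up to an overall phase, which we ignore.

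The main obstacle, and the step to handle carefully, is the ``constant along a column'' deduction: I must ensure that the restriction of $\mathbf{a}$ to the occupied cells of a given column is forced to be a single constant $0$ or $1$. This follows because the generators $Z_c Z_{c'}$ range over all pairs of occupied cells in that column (equivalently, over consecutive pairs, which by the overcompleteness remark generate the same constraints), so the differences $a_c + a_{c'}$ vanish for every pair, pinning $\mathbf{a}$ to a constant on each column's support. A subtlety worth a sentence is that a column (or row) containing only a single occupied cell imposes no constraint, so the corresponding component of $\mathbf{a}$ (resp.\ $\mathbf{b}$) is free; this is consistent, since a one-qubit column operator $C_j = X_{i,j}$ is itself a legitimate generator on the right-hand side. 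Once both parts are expressed through the $C_j$ and $R_i$, the two inclusions together yield the claimed equality.
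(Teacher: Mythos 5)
Your proposal is correct and follows essentially the same route as the paper: verify commutation of $R_i$, $C_j$ with the generators directly, then show that commuting with all row generators $X_cX_{c'}$ forces the $Z$-part to be constant on each row and commuting with all column generators $Z_cZ_{c'}$ forces the $X$-part to be constant on each column. Your version is slightly more explicit than the paper's in decomposing a general $P$ as $X(\mathbf{a})Z(\mathbf{b})$ and noting that the two constraint families decouple (the paper treats pure $Z$-type and pure $X$-type operators and leaves the general case implicit), but the underlying argument is the same.
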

\begin{proof}
Indeed, let us check that $R_i\in \calC(\calG)$ for any row $i$.
A generator  $X_c X_{c'}$  located at the row $i$
anti-commutes with $R_i$  at both
cells $c$ and $c'$. Thus  $X_c X_{c'}$  commutes with $R_i$. In addition, $R_i$ commutes
with generators $X_c X_{c'}$ located at rows $i'\ne i$ since their supports do not overlap.
It also commutes with generators $Z_c Z_{c'}$ since they are both operators of $Z$-type.
 Hence $R_i\in \calC(\calG)$.
The same reasoning shows that  $C_j\in \calC(\calG)$.
Conversely, let $P^Z\in \calC(\calG)$
be a Pauli operator of $Z$-type.
If  $P^Z$ acts by $Z$ on some qubit $c$, it must act by $Z$ on every other qubit $c'$ located in the same row
since $P^Z$  has to commute with all generators $X_c X_{c'}$ in this row.
It shows that $P^Z$ is a product of the row operators over some subset of rows. Similarly, any operator
$P^X\in \calC(\calG)$  of $X$-type is a product of the column operators over some subset of columns.
It proves Eq.~(\ref{C(G)}).
\end{proof}

Note that the supports of $R_i$ and $C_j$ overlap on exactly one qubit if $A_{i,j}=1$
and do not overlap if $A_{i,j}=0$. It follows that the matrix $A$
controls the commutation rules between the row and column operators, namely,
\be
\label{RC}
R_i C_j =(-1)^{A_{i,j}}\, C_j R_i
\ee
for all pairs $i,j$.
Using Proposition~\ref{prop:C(G)} we can parameterize  any $X$-type operator $P^X\in \calC(\calG)$
by a binary string $x\in \{0,1\}^m$ such that
\be
\label{Px}
P^X=\prod_{j=1}^m C_j^{x_j}=\prod_{i,j\, : \, A_{i,j} x_j=1}\; X_{i,j}.
\ee
Similarly, any $Z$-type operator $P^Z\in \calC(\calG)$
can be parameterized by a binary string $z\in \{0,1\}^m$ such that
\be
\label{Pz}
P^Z=\prod_{i=1}^m R_i^{z_i}=\prod_{i,j\, : \, z_i A_{i,j}=1}\; Z_{i,j}.
\ee
The commutation rules Eq.~(\ref{RC}) then imply that
\be
\label{crule}
P^X P^Z =(-1)^{z^T A x}\, P^Z P^X, \quad z^T A x\equiv \sum_{i,j} A_{i,j} z_i x_j.
\ee
Recall that the stabilizer group $\calS$ of a subsystem code is defined
as $\calS=\calG\cap \calC(\calG)$. From Eq.~(\ref{crule}) we infer
that $P^X$ commutes with all elements of $\calC(\calG)$ iff $x\in \mathrm{Ker}(A)$.
In this case one has $P^X\in \calC(\calC(\calG))=\calG$, that is, $P^X\in \calS$.
The same argument shows that $P^Z\in \calS$ iff $z\in \mathrm{Ker}(A^T)$.
Thus stabilizers of $X$-type and $Z$-type can be identified with
right and left zero-vectors of $A$ respectively.
Using the standard Gram-Schmidt orthogonalization
one can choose $k=\rank{(A)}$ pairs of operators
$P^X_a,P^Z_a\in \calC(\calG)$, $a=1,\ldots,k$, such that
$P^X_a$ are linear combinations of the column operators,
$P^Z_a$ are linear combinations of the row operators,
\[
P^X_a P^Z_b =(-1)^{\delta_{a,b}}\, P^Z_b P^X_a,
\]
and
\[
\calC(\calG)=\la \calS, P^X_1,\ldots,P^X_k,P^Z_1,\ldots,P^Z_k\ra.
\]
It shows that $P^X_a,P^Z_a$ are the bare logical Pauli operators
and the code $\calG$ has $k$ logical qubits.

Let us now determine the distance of $\calG$.
Consider some bare logical operator $P^X$ defined in Eq.~(\ref{Px}).
Let us analyze how one can
reduce  the weight of $P^X$ by multiplying it with the gauge operators.
Obviously, if $P^X$ has even weight in some row $i$,
that is, $\sum_j A_{i,j} x_j=0 \pmod{2}$,
one can completely cancel $P^X$ in this
row by multiplying it with the generators $X_cX_{c'}$ located in this row.
On the other hand, if $P^X$ has odd weight in some row $i$,
that is, $\sum_j A_{i,j} x_j=1 \pmod{2}$, the best one can do is to reduce
the weight of $P^X$ in this row down to $1$. Also it is clear that multiplying $P^X$ with
gauge operators of $Z$-type cannot decrease its weight. It shows that the
minimum weight of a dressed logical operator corresponding to $P^X$
is equal to the number of rows $i$ for which  $\sum_j A_{i,j} x_j=1 \pmod{2}$.
The number of such rows is nothing but the Hamming weight of the vector $Ax$.
 Note that $Ax\ne 0$  whenever $P^X$ is a non-trivial logical operator since $Ax=0$
 implies that $P^X\in \calS$, see above.
We conclude that the minimum weight of $X$-type dressed operators coincides
with the minimum Hamming weight of a vector $Ax$ where $x\notin \mathrm{Ker}(A)$.
Since such vectors span the subspace $\Ccol$, their minimum weight coincides with the distance
$d_{col}$. Similar arguments show that the minimum weight of $Z$-type dressed operators coincides
with $d_{row}$.
\end{proof}

\section{Gilbert-Varshamov bound for binary matrices}
\label{sec:GV}
Let $\calB(m,k)$ be the set of  all $m\times m$ binary matrices with the  rank $k$ over the binary field $\FF_2$.
Given any matrix $A\in \calB(m,k)$ let $\Ccol(A)\subseteq \{0,1\}^m$
be the linear subspace spanned by the columns of $A$.
Similarly, let $\Crow(A)\subseteq \{0,1\}^m$
be the linear subspace spanned by the rows of $A$.
Let $\dcol(A)$ and $\drow(A)$ be the minimum
distance of the classical code $\Ccol(A)$ and $\Crow(A)$ respectively.
We shall use the notation $H_2(p)=-p\log_2{p}-(1-p)\log_2{(1-p)}$ for the binary entropy.
The main result of this section is the following theorem.
\begin{theorem}[\bf Gilbert-Varshamov bound]
\label{thm:GV}
Let $\alpha>0$
and $0<\beta<1/2$ be any constants such that $\alpha<1-H_2(\beta)$.
Then for all sufficiently large $m$ and for all $k\le \alpha m$ there exists  a matrix $A\in \calB(m,k)$
such that $\dcol(A)\ge \beta m$ and $\drow(A)\ge \beta m$.
\end{theorem}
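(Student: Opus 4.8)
The plan is to use a probabilistic (Gilbert-Varshamov style) argument, but with a twist: I must simultaneously control the \emph{rank} of $A$, the minimum distance $\dcol(A)$ of its column space, and the minimum distance $\drow(A)$ of its row space. The cleanest way to guarantee rank exactly $k$ is not to sample a fully random $m \times m$ matrix, but to parameterize $A$ by its factorization. Specifically, I would write $A = G^T H$ (or $A = U V$) where $U$ is $m \times k$ and $V$ is $k \times m$, both over $\FF_2$, chosen so that $A$ has rank $k$. Then $\Ccol(A)$ is contained in the column space of $U$ and $\Crow(A)$ is contained in the row space of $V$; in fact, since we want the distances of the classical codes $\Ccol$ and $\Crow$, it is natural to demand that the columns of $A$ span a $k$-dimensional code of good distance and likewise for the rows.

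First I would set up the counting. Fix a target: I want a rank-$k$ matrix $A$ whose column space $\Ccol(A)$, a $[m,k]$ linear code, has minimum distance at least $\beta m$, and whose row space $\Crow(A)$ likewise has distance at least $\beta m$. The Gilbert-Varshamov bound for classical linear codes says that a uniformly random $[m,k]$ code (equivalently a random $k$-dimensional subspace, or the column space of a random full-rank generator matrix) avoids having any nonzero codeword of weight below $\beta m$ with high probability provided $k/m < 1 - H_2(\beta)$, because the volume of the Hamming ball of radius $\beta m$ is at most $2^{H_2(\beta) m}$ and the number of nonzero codewords is $2^k - 1 < 2^{\alpha m}$; a union bound gives failure probability at most $2^{(\alpha + H_2(\beta) - 1)m} \to 0$. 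The hypothesis $\alpha < 1 - H_2(\beta)$ is exactly what makes this exponent negative.

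The key step is to handle \emph{both} the row and column codes at once while still achieving rank exactly $k$. The natural model is: choose a random $k \times m$ matrix $V$ whose rows are linearly independent (its row space is a random $[m,k]$ code) and a random $m \times k$ matrix $U$ whose columns are linearly independent (its column space is a random $[m,k]$ code), and set $A = U S V$ for a fixed invertible $k \times k$ matrix $S$ (e.g.\ $S = I$). Then $\rank(A) = k$ automatically, $\Ccol(A)$ equals the column space of $U$, and $\Crow(A)$ equals the row space of $V$. Now $\Ccol(A)$ and $\Crow(A)$ are two independent random $[m,k]$ codes, so by the classical Gilbert-Varshamov estimate each has minimum distance $\geq \beta m$ except with probability $o(1)$. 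By a union bound over the two failure events, the probability that \emph{both} distances are at least $\beta m$ is positive (indeed $1 - o(1)$) for all sufficiently large $m$, which yields the desired matrix $A \in \calB(m,k)$.

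I expect the main obstacle to be the bookkeeping in the independence claim: I must verify that with the factorization $A = UV$, the column space of $A$ really is all of the column space of $U$ (not a proper subspace), which requires $V$ to be surjective as a map $\FF_2^m \to \FF_2^k$, and dually that $\Crow(A)$ is the full row space of $V$, requiring $U$ to be injective. Both hold precisely because $U$ has independent columns and $V$ has independent rows, i.e.\ both have rank $k$; this is where choosing the factored model pays off, since it bakes the rank condition in from the start. A secondary technical point is that the distribution of $\Ccol(A)$ induced by a random full-rank $U$ is uniform over $k$-dimensional subspaces, so that the standard Gilbert-Varshamov counting applies verbatim; I would confirm this by noting that every $[m,k]$ code arises from the same number of generator matrices. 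Once independence and uniformity are established, the two union-bound steps are routine and the theorem follows.
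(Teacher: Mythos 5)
Your proposal is correct, and it takes a genuinely different route from the paper. The paper works directly with the set $\calB(m,k)$ of all rank-$k$ matrices: it shows by the transitivity of $GL(m,\FF_2)$ on nonzero vectors that the number $W(x)$ of matrices $A\in\calB(m,k)$ with $x\in\Ccol(A)$ is independent of $x\ne 0$, derives the double-counting identity $(2^k-1)\,|\calB(m,k)|=(2^m-1)\,W$, and then union-bounds the number of ``bad'' matrices over all low-weight $x$, concluding that a uniformly random element of $\calB(m,k)$ works. You instead bake the rank condition in via the factorization $A=UV$ with $U\in\FF_2^{m\times k}$ of full column rank and $V\in\FF_2^{k\times m}$ of full row rank, observe that $\Ccol(A)$ and $\Crow(A)$ are then exactly the column space of $U$ and the row space of $V$, and reduce to two independent instances of the classical Gilbert--Varshamov bound for uniformly random $[m,k]$ codes. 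Your verifications are the right ones: full rank of $U$ and $V$ guarantees $\rank(A)=k$ and that the two code spaces are not proper subspaces, and the uniformity of the induced distribution on $k$-dimensional subspaces holds because every such subspace has the same number of ordered bases. The two arguments are quantitatively identical (both hinge on $2^{mH_2(\beta)}\cdot 2^{k-m}\to 0$ under $\alpha+H_2(\beta)<1$); what your factorized model buys is modularity, since each code is handled by the textbook GV estimate, while the paper's counting over $\calB(m,k)$ buys the statement, used in its concluding remarks, that a uniformly random fixed-rank matrix succeeds with probability tending to one. In fact the two sampling models coincide: every $A\in\calB(m,k)$ admits exactly $|GL(k,\FF_2)|$ factorizations $A=UV$ of your form, so your $A$ is uniform on $\calB(m,k)$ as well, though your argument never needs this.
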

\begin{proof}
Our arguments will rely Gilbert-Varshamov bound for classical codes as presented in~\cite{CS95}.

For any non-zero vector $x\in \{0,1\}^m$ let $W(x)$ be the number of matrices
$A\in \calB(m,k)$ such that $x\in \Ccol(A)$.
We claim that $W(x)$ does not depend on $x$ as long as $x$ is a non-zero vector.
Indeed, let $e_1=(10\ldots0)$ be the first basis vector of $\{0,1\}^m$.
Let $W$ be the number of matrices $A\in \calB(m,k)$ such that $e_1\in \Ccol(A)$.
Any non-zero $x\in \{0,1\}^m$ can be written as $x=Re_1$ for some $R\in \calB(m,m)$.
By definition of the column space we have $x\in \Ccol(A)$ iff $x=Ay$ for
some $y\in \{0,1\}^m$. Hence
\[
x\in \Ccol(A) \quad \mbox{iff} \quad e_1\in \Ccol(R^{-1} A).
\]
Since the map $A\to RA$ defines a bijection from the set $\calB(m,k)$ to itself,
we conclude that $W(x)=W$. Since the rank of a matrix is invariant under transpositions,
we conclude that for any non-zero $x\in \{0,1\}^m$ the number of matrices
$A\in \calB(m,k)$ such that $x\in \Crow(A)$ also equals $W$.

Define a matrix $\Gamma_{x,A}$ labeled by non-zero $x\in \{0,1\}^m$ and $A\in \calB(m,k)$
such that
\[
\Gamma_{x,A}=\left\{ \ba{rcl}
1 &\mbox{if} & x\in \Ccol(A), \\
0 && \mbox{otherwise}. \\
\ea
\right.
\]
For any $A\in \calB(m,k)$ one has $\sum_{x\ne 0} \Gamma_{x,A} =2^k-1$ since
$\Ccol(A)$ has dimension $k$. Also for any $x\ne 0$ one has
$\sum_{A\in \calB(m,k)}\,  \Gamma_{x,A} = W$.
Hence we arrive at
\be
\label{W}
(2^k-1) |\calB(m,k)| = (2^m-1) W.
\ee
Let $d=\lceil\beta m\rceil$ and
$N$ be the number of matrices $A\in \calB(m,k)$ such that
$\Ccol(A)$ or $\Crow(A)$ contains a non-zero vector with weight smaller than $d$.
It suffices to check that for the chosen $d$ and any $k\le \alpha m$ one has $N<|\calB(m,k)|$.
Applying the union bound one gets
\[
N\le 2W\sum_{i=1}^{d-1} {m \choose i}\le 2W\cdot  2^{m H_2((d-1)/m)}\le 2W 2^{mH_2(\beta)}.
\]
Here we used the assumption $\beta\le 1/2$.
Using Eq.~(\ref{W}) and a bound $(2^m-1)^{-1}\le 2^{1-m}$  we arrive at
\[
N \le 4\cdot 2^{k-m+mH_2(\beta)}\cdot |\calB(m,k)|.
\]
We conclude that $N<|\calB(m,k)|$ whenever
\[
m(\alpha-1+H_2(\beta))<-2.
\]
Since we assumed that $\alpha<1-H_2(\beta)$, it holds for all sufficiently large $m$.
\end{proof}

\section{Breaking up the long-range generators}
\label{sec:local}
Let $A$ be an arbitrary $m\times m$ binary matrix and $\calG$ be the subsystem $[n,k,d]$ code
associated with $A$.
As was pointed out in Section~\ref{sec:GBS}, one can choose an independent set of generators of $\calG$
that includes operators  $X_c X_{c'}$ for consecutive pairs of qubits $c,c'$ that
belong to the same row, and operators $Z_c Z_{c'}$ for consecutive pairs of qubits $c,c'$ that
belong to the same column.
We shall consider the 2D geometry in which cells of the matrix $A$ are identified with
sites of a two-dimensional lattice of size $m\times m$.

Consider a pair of consecutive qubits $c$, $c'$ in some row $i$ and the corresponding
generator $X_c X_{c'}$. In general the cells $c$ and $c'$ are not nearest neighbors, so there might be one or several empty cells
 in the $i$-th row (those with $A_{i,j}=0$) between $c$ and $c'$.
Let these empty cells be $c_1,\ldots, c_p$, see Fig.~\ref{fig:cells}.
\begin{figure}[htb]
\centerline{
\mbox{
 \includegraphics[height=1cm]{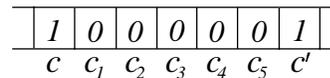}
 }}
\caption{The cells $c$ and $c'$ represent a pair of consecutive qubits in some row of $A$
that are not nearest neighbors. The cells $c_1,\ldots,c_p$ are empty.}
 \label{fig:cells}
\end{figure}

Our strategy will be to simulate the long-range generator $X_cX_{c'}$ by adding an ancillary qubit
at every cell $c_1,\ldots, c_p$ and connecting $c$, $c'$ by a chain of
short-range generators,
\be
\label{chain}
X_c X_{c'}=(X_c X_{c_1})\cdot (X_{c_1} X_{c_2})\cdots (X_{c_p} X_{c'}).
\ee
More formally, we shall define a new subsystem code $\calG'$ obtained from $\calG$
by taking out the long-range generator $X_c X_{c'}$ and adding all the short-range generators
that appear in the right-hand side of Eq.~(\ref{chain}). In addition,  $\calG'$
will contain one-qubit generators $Z_{c_1},\ldots, Z_{c_p}$ on every added ancillary qubit.
We have to verify that the codes $\calG$ and $\calG'$ have the same parameters $k$ and $d$.
It follows from the following lemma.
\begin{lemma}
\label{lemma:break}
Let $\calG$ be any subsystem $[n,k,d]$ code. Let
$q$ be one of the physical qubits of $\calG$ and $a$ be an extra ancillary qubit.
Define a new subsystem code
\be
\calG'=\la \calG, X_q X_a,Z_a\ra,
\ee
where all elements of $\calG$ act trivially on the ancillary qubit.
Then $\calG'$ has parameters $[n+1,k,d]$.
\end{lemma}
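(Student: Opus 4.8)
The plan is to exploit the fact that the two added generators $X_q X_a$ and $Z_a$ anticommute and act non-trivially on the fresh ancilla $a$, so together they introduce exactly one new gauge qubit that absorbs the single new physical degree of freedom; I will make this precise and show that neither $k$ nor $d$ is affected. That $\calG'$ has $n+1$ physical qubits is immediate. Throughout, for a Pauli operator $P$ on the first $n$ qubits let $\zeta_q(P)\in\{0,1\}$ record whether $P$ anticommutes with $X_q$ (equivalently, whether $P$ acts by $Y$ or $Z$ on $q$); this is an $\FF_2$-linear functional of $P$.

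For the count $k'=k$ I would use the standard dimension formula for subsystem codes, $k=N-\tfrac{1}{2}(r+s)$, where $N$ is the number of physical qubits, $r=\dim_{\FF_2}\calG$ is the number of independent generators, and $s=\dim_{\FF_2}\calS$ is the dimension of the center $\calS=\calG\cap\calC(\calG)$. First, $r'=r+2$: the two new generators act non-trivially on $a$ (where all of $\calG$ acts trivially) and differ there by $X_a$ versus $Z_a$, so they are independent of $\calG$ and of each other. Next I would compute the center directly. Writing a general element of $\calG'$ as $g\,(X_qX_a)^u(Z_a)^v$ with $g\in\calG$, commuting with $Z_a$ forces $u=0$, commuting with $X_qX_a$ then forces $v=\zeta_q(g)$, and commuting with all of $\calG$ forces $g\in\calS$. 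Hence $\calS'=\{\,g\,Z_a^{\zeta_q(g)}:g\in\calS\,\}$, which by $\FF_2$-linearity of $\zeta_q$ is the image of $\calS$ under an injective homomorphism, so $s'=s$. Combining, $k'=(n+1)-\tfrac{1}{2}\big((r+2)+s\big)=k$.

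For the distance I would first describe $\calC(\calG')$ by the same commutation analysis: every element is $P_0\,Z_a^{\zeta_q(P_0)}$ with $P_0\in\calC(\calG)$, and it lies in $\calG'$ exactly when $P_0\in\calG$. Thus the non-trivial bare logical operators of $\calG'$ biject with those of $\calG$. Using the formula $d=\min_{P\in\calC(\calG)\setminus\calG}\min_{G\in\calG}|PG|$, the bound $d'\le d$ is easy: given $P_0\in\calC(\calG)\setminus\calG$ and $g\in\calG$ with $|P_0 g|=d$, the gauge element $g\,Z_a^{\zeta_q(P_0)}\in\calG'$ multiplied into the bare logical $P_0\,Z_a^{\zeta_q(P_0)}$ cancels the ancilla completely, leaving $P_0 g$, of weight $d$ and supported on the first $n$ qubits.

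The main obstacle is the reverse inequality $d'\ge d$, because a $\calG'$-gauge element may use the two-qubit generator $X_qX_a$, which touches the real qubit $q$ and could in principle shave weight there. I would split on the exponent $u$ of $X_qX_a$ in the gauge element $G'=g\,(X_qX_a)^u(Z_a)^v$. If $u=0$, the ancilla part contributes weight $\ge 0$ and the first-$n$-qubit part is $P_0 g$, so the total is at least $|P_0 g|\ge d$. If $u=1$, the ancilla necessarily carries $X_a$ and so contributes weight exactly $1$, while the $n$-qubit part is $P_0 g X_q$; since multiplying by the single-qubit operator $X_q$ changes the weight by at most one, $|P_0 g X_q|\ge|P_0 g|-1$, and the $+1$ from the ancilla exactly restores the deficit, giving total weight $\ge|P_0 g|\ge d$. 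The delicate point is precisely this bookkeeping: the lone one-qubit generator $Z_a$ and the two-qubit generator $X_qX_a$ are calibrated so that any weight saved on $q$ must be paid back on the ancilla. Taking the minimum over all representatives yields $d'\ge d$, hence $d'=d$ and $\calG'$ has parameters $[n+1,k,d]$.
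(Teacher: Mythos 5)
Your proof is correct, but it takes a genuinely different route from the paper's. The paper reduces the lemma to a trivial case by a local Clifford equivalence: it introduces the auxiliary group $\tilde{\calG}=\la \calG, X_a, Z_a\ra$, which is manifestly a $[n+1,k,d]$ code (one decoupled gauge qubit added), and observes that conjugation by the CNOT $U$ with control $a$ and target $q$ maps $\tilde{\calG}$ onto $\calG'$ (since $UX_aU^\dag=X_qX_a$, $UZ_aU^\dag=Z_a$, and $UgU^\dag = g\,Z_a^{\zeta_q(g)}\in\calG'$ for $g\in\calG$); preservation of $k$ is then automatic, and $d'=d$ follows from a short cleaning argument showing a minimum-weight dressed logical operator can be taken to act trivially on $a$ on both sides. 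You instead compute everything in the symplectic formalism: $k'=k$ via the dimension formula $k=N-\tfrac12(\dim\calG+\dim\calS)$ together with your explicit identifications $\dim\calG'=\dim\calG+2$ and $\calS'=\{g\,Z_a^{\zeta_q(g)}:g\in\calS\}$, and $d'=d$ via an explicit description of $\calC(\calG')$ plus the weight bookkeeping in which any weight shaved off at $q$ by using $X_qX_a$ is repaid by the mandatory $X_a$ on the ancilla. I checked the details (linearity of $\zeta_q$, independence of the two new generators, the case split $u\in\{0,1\}$ in the lower bound) and they are all sound. The paper's argument is shorter and explains \emph{why} the lemma holds ($\calG'$ is locally Clifford-equivalent to $\calG$ plus a decoupled gauge qubit); yours is more elementary and self-contained, requires no inspired choice of unitary, and exposes explicitly the calibration between the generators $Z_a$ and $X_qX_a$ that keeps the distance from dropping. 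Either proof would serve.
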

Indeed, applying the lemma with  $q=c$ and $a=c_1$ we obtain a new code  $\calG'$
with the same parameters $k$ and $d$, with
an extra short-range generator $X_c X_{c_1}$,
and an extra one-qubit generator $Z_{c_1}$.
Although $\calG'$
inherits the long-range generator $X_c X_{c'}\in \calG$, we can now  replace this generator
by $X_{c_1} X_{c'}=(X_c X_{c_1})\cdot (X_c X_{c'})$.
Note that $X_{c_1} X_{c'}$ has a shorter length.
Now we can apply the lemma
again with $q=c_1$ and $a=c_2$
which breaks the generator $X_{c_1} X_{c'}$ into a pair $X_{c_1} X_{c_2}$, $X_{c_2}X_{c'}$
and an extra generator $Z_{c_2}$.
 We can continue this process
until all cells $c_1,\ldots,c_p$ are occupied by ancillary qubits and
the long-range generator $X_c X_{c'}$ is replaced by a chain of short-range
generators, see Eq.~(\ref{chain}), and one-qubit generators $Z_{c_1},\ldots,Z_{c_p}$.

Similar arguments can be applied to
break up all long-range $Z$-type generators. The resulting code $\calG''$ has at most two qubits
at every cell of $A$. It has two-qubit generators $X_c X_{c'}$, $Z_c Z_{c'}$ coupling only horizontal and vertical nearest
neighbor cells respectively. In addition, $\calG''$ has one-qubit generators $Z_c$ and $X_c$ for some cells $c$.
Lemma~\ref{lemma:break} implies that the code $\calG''$
has the same parameters $k$ and $d$.

Now we can easily prove Theorem~\ref{thm:main}. Let $m$ be any sufficiently large integer.
Theorem~\ref{thm:GV} implies that for any $k\le \alpha m$
there exists a matrix $A\in \calB(m,k)$ such that $\dcol(A)\ge \beta m$ and $\drow(A)\ge \beta m$.
By Theorem~\ref{thm:GBS},  the corresponding subsystem code $\calG$ has parameters $[n,k,d]$,
where $n\le m^2$ and $d=\min{(\dcol,\drow)}\ge \beta m$, see Theorem~\ref{thm:GBS}.
Transforming $\calG$ into a local form as explained above increases
the number of physical qubits due to the addition of ancillas. Since each cell
of $A$ contains at most two qubits, we can get a  $[2m^2,k,d]$ code.
(Some cells of $A$ may contain only one qubit or no qubits at all.
We can add extra unused gauge qubits to each of those cells thus making the total number of qubits $2m^2$.)
In the rest of this section we prove Lemma~\ref{lemma:break}.
\begin{proof}
Consider an auxiliary subsystem code with a gauge group $\tilde{\calG}=\la \calG, X_a, Z_a\ra$,
where all elements of $\calG$ act trivially on the ancillary qubit.
Clearly $\tilde{\calG}$ is obtained from $\calG$ by adding one gauge qubit and thus
$\tilde{\calG}$ has parameters $[n+1,k,d]$.
Let $U$ be the CNOT gate with the control qubit $a$ and a target qubit $q$,
see Fig.~\ref{fig:cnot}.
One can easily check  that $U$ maps $\tilde{\calG}$ to $\calG'$, that is,
\be
\label{U}
\calG'=\{ P'=U\tilde{P}U^\dag, \quad \tilde{P}\in \tilde{\calG}\}.
\ee
\begin{figure}[htb]
\centerline{
\mbox{
 \includegraphics[height=2cm]{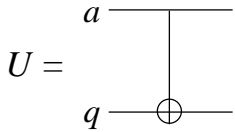}
 }}
\caption{The conjugation by $U$ maps generators of $\tilde{G}$ to generators of $\calG'$
since $UX_a U^\dag = X_q X_a$ and $UZ_a U^\dag = Z_a$.}
 \label{fig:cnot}
\end{figure}
It follows that $\calG'$ and $\tilde{\calG}$ have the same number of logical qubits, that is,
$\calG'$ has parameters $[n+1,k,d']$ for some distance $d'$.
Let us show that $d'=d$.
Indeed, Eq.~(\ref{U}) implies that $\tilde{P}$ is a dressed logical operator for $\tilde{\calG}$ iff $P'\equiv U \tilde{P} U^\dag$ is
a dressed logical operator for $\calG'$. Let $\tilde{P}$ be a dressed logical operator of $\tilde{\calG}$
with the minimum weight, that is, $|\tilde{P}|=d$.
Since $X_a,Z_a\in \tilde{\calG}$, minimality of the weight implies that
$\tilde{P}$ acts trivially on the qubit $a$.
Then $P'=U\tilde{P}U^\dag$ is a dressed logical operator for $\calG'$ and $P'$ acts on the qubit $a$
either by identity or by $Z$. More specifically, $P'=\tilde{P} Z_a^\alpha$ where $\alpha=0$ iff
$\tilde{P}$ acts on $q$ by $I$ or $X$, and $\alpha=1$ iff $\tilde{P}$ acts on $q$ by $Z$ or $Y$.
Since $Z_a\in \calG'$ we conclude that $\tilde{P}$ is also a dressed logical operator for $\calG'$
and thus $d'\le |\tilde{P}|=d$.

Conversely, let $P'$ be a dressed logical operator of $\calG'$ with the minimum weight, that is, $|P'|=d'$.
Using the gauge operators $X_q X_a$ and $Z_a$ we can cancel the action of $P'$ on the qubit $a$
without increasing its total weight, so we can additionally assume that $P'$ acts trivially on $a$.
Then $\tilde{P}=U^\dag P' U$ is a dressed logical operator for $\tilde{\calG}$ and
$\tilde{P}=P' Z_a^\alpha$ for some $\alpha\in \{0,1\}$.
Since $Z_a\in \tilde{\calG}$ we conclude that $P'$ is also a dressed logical operator for $\tilde{\calG}$ and
thus $d\le |P'| =d'$. We have shown that $d'=d$.
\end{proof}

\section{The upper bound: technical tools}
\label{sec:tech}
Let us begin by introducing some more notations.
We shall assume that the $n$ physical qubits occupy sites of a 2D lattice $\Lambda$
of size $\sqrt{n}\times \sqrt{n}$.
For any subgroup $\calG\subseteq \calP$ and any subset of qubits $M\subseteq \Lambda$
introduce a group
\[
\calG(M)=\{ P\in \calG\, : \, \mathrm{Supp}(P)\subseteq M\}
\]
which includes all elements of $\calG$ whose support is contained in
$M$. In particular, $\calP(M)$ is a group of all Pauli operators
whose support is contained in $M$.
For any subset $M\subseteq \Lambda$ let $\overline{M}=\Lambda\backslash M$
be the complement of $M$.
Introduce also a group
\[
\calG_M=\{ P\in \calP(M)\, : \, PQ\in \calG \quad \mbox{for
some $Q\in \calP(\overline{M})$}\}
\]
which includes all Pauli operators $P\in \calP(M)$  that can be
extended to some element of $\calG$.
In other words  $\calG_M$ is a group obtained by
restricting elements in $\calG$ to $M$.
By definition $\calG(M)\subseteq \calG_M\subseteq
\calP(M)\subseteq \calP$.

Throughout this paper we shall ignore overall phase factors of Pauli operators.
Then the Pauli group $\calP$ can be regarded as the $2n$-dimensional binary space~\cite{CRSS97}
such that  multiplication of the Pauli operators corresponds to addition of the binary
strings modulo two. For any subgroup $\calG\subseteq \calP$ define its dimension $\dim{\calG}$
as the smallest number of Pauli operators generating $\calG$. In particular, $\dim{\calP}=2n$
and $\dim{\calP(M)}=2|M|$.
The standard stabilizer formalism~\cite{CRSS97} implies that
\[
\calC(\calC(\calG))=\calG
\]
and
\[
\dim{\calC(\calG)}+\dim{\calG}=2n
\]
for any subgroup $\calG$.
We shall need the following simple fact.
\begin{fact}
\label{fact:1}
Let $\calG\subseteq \calP$ be any subgroup and $M\subseteq \Lambda$ be any subset of qubits.
Suppose $2|M|<\dim{\calG}$. Then
$\calG$ contains at least one non-identity operator acting trivially on $M$.
\end{fact}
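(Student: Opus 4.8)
Let $\calG\subseteq \calP$ be any subgroup and $M\subseteq \Lambda$ any subset of qubits. Suppose $2|M|<\dim{\calG}$. Then $\calG$ contains at least one non-identity operator acting trivially on $M$.

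Let me think about how to prove this Fact.

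We have the binary-space picture of the Pauli group: ignoring phases, $\calP$ is a $2n$-dimensional vector space over $\FF_2$, and $\dim{\calG}$ is the dimension of the subspace $\calG$ (= minimal number of generators). We want to show that if $M$ is small enough, specifically $2|M| < \dim{\calG}$, then some nontrivial element of $\calG$ is supported entirely on the complement $\overline{M}$.

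The natural approach is to consider the restriction map that reads off the action of a Pauli operator on the qubits in $M$. Let me flesh this out.

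The plan is to consider the linear map that sends a Pauli operator to its "restriction" to $M$, and to bound the dimension of its image. Concretely, define a homomorphism $\rho : \calG \to \calP(M)$ sending each $P\in\calG$ to the Pauli operator on $M$ obtained by discarding the tensor factors on $\overline{M}$ (equivalently, in the binary picture, projecting the $2n$-bit string onto the $2|M|$ coordinates indexed by qubits in $M$). Because multiplication of Pauli operators is addition of binary strings, $\rho$ is $\FF_2$-linear. An operator $P\in\calG$ acts trivially on $M$ precisely when $\rho(P)$ is the identity, i.e. $P\in\ker\rho$; and $P$ being nontrivial means $P\neq I$ as an element of the binary space. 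So it suffices to produce a nonzero element of $\ker\rho$.

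First I would observe that $\rho$ lands in $\calP(M)$, which has dimension $2|M|$ as a binary space, so $\dim(\mathrm{Im}\,\rho)\le 2|M|$. The key step is then the rank–nullity theorem applied to the linear map $\rho$: $\dim\ker\rho = \dim\calG - \dim(\mathrm{Im}\,\rho) \ge \dim\calG - 2|M|$. By the hypothesis $2|M|<\dim\calG$, the right-hand side is strictly positive, so $\ker\rho$ is a nonzero subspace and therefore contains a nonzero vector. That nonzero element of $\ker\rho$ is, by construction, a non-identity element of $\calG$ whose support is contained in $\overline{M}$, i.e. which acts trivially on $M$. This is exactly the claim.

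The argument is almost entirely formal once the right map is set up, so I do not expect a genuine obstacle. The one point deserving care is making sure the "restriction" map is well defined as a group homomorphism \emph{on all of} $\calG$ (not just those elements already supported on $\overline{M}$) and that it is indeed $\FF_2$-linear in the binary encoding; this is where the convention of ignoring phase factors, already adopted in the paper, is essential, since otherwise truncating the support could introduce sign ambiguities and spoil linearity. Given that convention, the dimension count via rank–nullity closes the proof immediately.
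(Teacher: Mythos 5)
Your argument is correct and is essentially the paper's own proof: the paper phrases the condition ``$P$ acts trivially on $M$'' as a homogeneous system of $2|M|$ binary linear equations on the $\dim\calG$-dimensional space $\calG$, which is exactly your rank--nullity computation for the restriction map $\rho$. Your write-up just makes the underlying linear map explicit; no substantive difference.
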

\begin{proof}
Indeed, the constraint that $P$ acts trivially on $M$ leads to a system
$2|M|$ binary linear equations which has a non-trivial solution whenever  $2|M|<\dim{\calG}$.
\end{proof}

Consider a subsystem code with a gauge group $\calG$
and a stabilizer group $\calS=\calG\cap \calC(\calG)$.
For any subset of qubits $M\subseteq \Lambda$ let
$l(M)$ be the number of independent dressed logical operators
supported inside $M$, that is,
\[
l(M)=\dim \calC(\calS_M) \cap \calP(M) -\dim \calG(M).
\]
Let $l_{\bare}(M)$ be the number of independent bare logical operators
supported inside $M$,
\[
l_{\bare}(M)=\dim \calC(\calG_M) \cap \calP(M) - \dim \calS(M).
\]
Our first technical tool will be the following lemma which was originally proved for
stabilizer codes by Yoshida and Chuang~\cite{Yoshida10}.
\begin{lemma}
\label{lemma:cleaning}
Suppose a subsystem code $\calG$ has $k$ logical qubits. Then
\[
l_{\bare}(M)+l(\overline{M})=2k
\]
for any subset of qubits $M\subseteq \Lambda$.
\end{lemma}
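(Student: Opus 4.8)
The plan is to reduce both quantities to pure dimension counts and then combine them using two structural identities together with the standard bookkeeping for subsystem codes.

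First I would record the basic duality between restriction and centralization. For any subgroup $\calG\subseteq\calP$ and any $M$ I claim that
\[
\calC(\calG_M)\cap\calP(M)=\calC(\calG)\cap\calP(M),
\]
and moreover that this common group is exactly the symplectic complement of $\calG_M$ inside the $2|M|$-dimensional Pauli group $\calP(M)$. The point is that every $g\in\calG$ factors (uniquely up to phase) as $g=P'Q'$ with $P'\in\calP(M)$ and $Q'\in\calP(\overline{M})$; since any $P\in\calP(M)$ automatically commutes with $Q'$ by disjointness of supports, $P$ commutes with $g$ iff it commutes with $P'$, and $P'$ ranges precisely over $\calG_M$ as $g$ ranges over $\calG$. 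Because the commutation form is nondegenerate on $\calP(M)$, the symplectic complement of $\calG_M$ has dimension $2|M|-\dim\calG_M$, so
\[
\dim\bigl(\calC(\calG_M)\cap\calP(M)\bigr)=2|M|-\dim\calG_M.
\]
The identical argument with $\calS$ in place of $\calG$ gives $\dim(\calC(\calS_{\overline M})\cap\calP(\overline M))=2|\overline M|-\dim\calS_{\overline M}$.

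Next I would invoke the rank--nullity identity arising from the restriction homomorphism $\pi_M\colon\calG\to\calP(M)$ that forgets the $\overline M$-part of each operator. Its image is $\calG_M$ and its kernel is $\calG(\overline{M})$, whence
\[
\dim\calG=\dim\calG_M+\dim\calG(\overline{M}),
\]
and likewise $\dim\calS=\dim\calS_{\overline M}+\dim\calS(M)$. Substituting the two displayed dimension formulas into the definitions of $l_{\bare}(M)$ and $l(\overline M)$ and adding, the $2|M|+2|\overline M|=2n$ terms survive, while the four remaining terms group as $-(\dim\calG_M+\dim\calG(\overline M))$ and $-(\dim\calS(M)+\dim\calS_{\overline M})$, i.e. as $-\dim\calG-\dim\calS$. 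This yields $l_{\bare}(M)+l(\overline M)=2n-\dim\calG-\dim\calS$.

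Finally I would close with the subsystem dimension count. Writing $s=\dim\calS$ for the number of independent stabilizer generators and $g$ for the number of gauge qubits, the stabilizer formalism of Section~\ref{sec:stabilizer} gives $\dim\calG=s+2g$ and $n=s+g+k$, so $\dim\calG+\dim\calS=2(s+g)=2(n-k)$, and the sum collapses to $2n-2(n-k)=2k$, as claimed. The bulk of the work, and the step most in need of care, is the first identity: translating the condition ``commutes with $\calG$'' into ``commutes with $\calG_M$'' and identifying the result as a symplectic complement of dimension $2|M|-\dim\calG_M$, since this is exactly where the passage from the full code to its restriction onto a region takes place. Everything after that is linear bookkeeping.
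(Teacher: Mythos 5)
Your proof is correct, and it takes a genuinely different route from the paper's. You reduce everything to three linear-algebra facts: (i) for $P\in\calP(M)$, commuting with $\calG$ is the same as commuting with $\calG_M$, so $\calC(\calG_M)\cap\calP(M)$ is the symplectic complement of $\calG_M$ inside $\calP(M)$ and has dimension $2|M|-\dim\calG_M$ (this is just the identity $\dim\calC(\calG)+\dim\calG=2n$ quoted in Section~\ref{sec:tech}, applied inside $\calP(M)$); (ii) rank--nullity for the restriction homomorphism, $\dim\calG=\dim\calG_M+\dim\calG(\overline{M})$ and likewise for $\calS$; and (iii) $\dim\calG+\dim\calS=2(n-k)$. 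Summing the definitions then gives $l_{\bare}(M)+l(\overline{M})=2n-\dim\calG-\dim\calS=2k$. All three steps check out (with the paper's convention that dimensions are taken modulo phases, which both arguments need). The paper argues quite differently: it fixes $m=l_{\bare}(M)$ explicit bare logical operators $P_a$ on $M$, completes them with $t=2k-m$ commuting bare logical operators $Q_a$, shows that each restriction $Q_a^{\rm in}$ lies in $\calG_M$ and hence can be cancelled by a gauge operator, producing $t$ independent dressed logical operators supported on $\overline{M}$, and gets the matching upper bound $l(\overline{M})\le t$ from commutation with the $P_a$. The paper's route is constructive -- it exhibits the ``cleaned'' operators, in the spirit of the Cleaning Lemma of~\cite{BT08} -- whereas yours is shorter, purely a dimension count, and in passing establishes the more general identity $l_{\bare}(M)+l(\overline{M})=2n-\dim\calG-\dim\calS$ valid before specializing to $\calS=\calG\cap\calC(\calG)$. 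Since the rest of the paper only ever uses the numerical statement of the lemma (not the explicit cleaned operators), your argument would serve equally well here.
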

It can also be regarded as a generalization of the ``Cleaning Lemma" proved for subsystem codes in~\cite{BT08}.
The Cleaning Lemma of~\cite{BT08} dealt only with a special case  $l_{\bare}(M)=0$ or $l(M)=0$.
Specializing Lemma~\ref{lemma:cleaning} to stabilizer codes ($\calG=\calS$) one gets a simpler
statement $l(M)+l(\overline{M})=2k$ which coincides with the result proved in~\cite{Yoshida10}.
Note that Lemma~\ref{lemma:cleaning} does not need any spatial locality properties.
\begin{proof}
Let $m\equiv l_{\bare}(M)$ and let $P_1,\ldots,P_m$
be $m$ independent bare logical operators supported inside $M$, that is,
\be
\label{P_a}
P_a\in \calC(\calG)\backslash \calG \quad \mbox{and} \quad \mathrm{Supp}(P_a)\subseteq M
\ee
for all $a=1,\ldots,m$. Since the code has $k$ logical qubits,
one can choose $t=2k-m$ independent bare logical operators $Q_1,\ldots,Q_t \in \calC(\calG)\backslash \calG$
commuting with $P_1,\ldots,P_m$. Then we have
\be
\label{centr1}
\calC(\calG_M)\cap \calP(M)=\la \calS(M),P_1,\ldots,P_m\ra.
\ee
and
\be
\label{PQ=QP}
P_a Q_b=Q_b P_a, \quad \forall \; a,b.
\ee
Taking the centralizer of both parts of Eq.~(\ref{centr1}) one arrives at
\be
\label{G_M}
\calG_M=\calP(M)\cap \calC(\calS(M)) \cap \calC(P_1) \cap \ldots \cap \calC(P_m).
\ee
Represent $Q_a=Q_a^{\rm in} Q_a^{\rm out}$, where $Q_a^{\rm in}$ and $Q_a^{\rm out}$ are the restrictions
of $Q_a$ onto $M$ and $\overline{M}$ respectively. Combining Eqs.~(\ref{P_a},\ref{PQ=QP},\ref{G_M})
and taking into account that $P_1,\ldots,P_m$ have support only on $M$
one gets
\[
Q_a^{\rm in} \in \calG_M \quad \mbox{for all $a=1,\ldots,t$}.
\]
By definition of $\calG_M$, there must exist gauge operators $G_a\in \calG$ extending $Q_a^{\rm in}$, that is,
$Q_a'\equiv  Q_a G_a \in \calP(\overline{M})$ for all  $a=1,\ldots,t$.
It follows that  $Q_1',\ldots,Q_t'$ are independent dressed logical operators
supported on $\overline{M}$, that is, $l(\overline{M})\ge t$. On the other hand, any dressed logical operator
supported on $\overline{M}$ must commute with $P_1,\ldots,P_m$, and thus $l(\overline{M})\le t$.
We have proved that $l(\overline{M})=t$.
\end{proof}
Our second technical tool is the ``Restriction Lemma" proved in~\cite{BT08}
which relates the distance of a subsystem code $\calG$ defined on the entire lattice $\Lambda$
to the distance of a code $\calG_M$ obtained by restricting $\calG$
onto some subset of qubits $M\subseteq \Lambda$.
Note a generators of $\calG_M$ can be chosen as generators of $\calG$ restricted to $M$.
Assuming that $\calG$ has spatially local generators with some interaction range $r$, the same
is true for $\calG_M$.
\begin{dfn}
Given an interaction range $r$ and a subset $M\subseteq \Lambda$
let $\partial M\subseteq \overline{M}$ be the set of all sites in $\overline{M}$
that lie
within distance $r$ from $M$.
\end{dfn}
\begin{lemma}[Restriction Lemma]
\label{lemma:restrict}
Suppose a gauge group $\calG$ has spatially local generators
with an interaction range $r$.
Choose any subset $M\subseteq \Lambda$  and consider the
subsystem code with the gauge group $\calG_M$.
Then one of the following is true:\\
(1) The code $\calG_M$ has no logical qubits,\\
(2) The code $\calG_M$ has distance at least $d-|\partial M|$\\
where $d$ is the distance of $\calG$.
\end{lemma}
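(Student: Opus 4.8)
The plan is to prove the single inequality $d\le |\bar P|+|\partial M|$ for \emph{every} nontrivial dressed logical operator $\bar P$ of the restricted code $\calG_M$; rearranged, this says each such $\bar P$ has weight at least $d-|\partial M|$, which is exactly the distance lower bound of case~(2), while case~(1) is simply the alternative that $\calG_M$ has no logical qubits and hence no $\bar P$ to extend. Concretely, I would show that $\bar P$, which is supported on $M$, can be completed to a nontrivial dressed logical operator of the full code $\calG$ by attaching a Pauli operator supported entirely on the boundary layer $\partial M$. Since such an attachment adds at most $|\partial M|$ to the weight, the estimate follows immediately, and spatial locality enters precisely in forcing the attachment to live on $\partial M$ rather than on all of $\overline{M}$.

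Two structural facts drive the argument, and I would establish them first. The first is geometric: because $\calG$ has generators of range $r$, any generator whose support meets $M$ is contained in $N\equiv M\cup\partial M$, so every $g\in\calG_M$, being a product of restrictions $G|_M$ of such generators, lifts to an honest element $G=g\tau\in\calG$ whose ``tail'' $\tau$ lies in $\calP(\partial M)$ (one simply multiplies the individual tails, each of which sits in $\partial M$). The text already notes that generators of $\calG_M$ may be taken as restrictions of generators of $\calG$, which is exactly what this lift uses. The second fact needs no locality: for $P\in\calP(M)$, commuting with $\calG_M$ is the same as commuting with all of $\calG$, whence $\calC(\calG_M)\cap\calP(M)=\calC(\calG)\cap\calP(M)$ and the stabilizer $\calG_M\cap\calC(\calG_M)$ of $\calG_M$ equals $\calG_M\cap\calC(\calG)$.

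With these in hand I would decompose $\bar P$. Applying the identity $\calC(\calS)=\calC(\calG)\cdot\calG$ inside the ambient group $\calP(M)$ to the code $\calG_M$ shows that any dressed logical operator of $\calG_M$ factors as $\bar P=g\,B$ with $g\in\calG_M$ and $B\in\calC(\calG)\cap\calP(M)$; since $\bar P\notin\calG_M$ we get $B\notin\calG_M$, and because $B$ is supported on $M$ while $\calG\cap\calP(M)\subseteq\calG_M$, this forces $B\notin\calG$, i.e.\ $B$ is a genuine bare logical operator of the full code. Now lift $g$ to $G=g\tau\in\calG$ with $\tau\in\calP(\partial M)$ and set $P\equiv B\,G=\bar P\,\tau$ (up to phase). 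Then $P\in\calC(\calS)\setminus\calG$ is a nontrivial dressed logical operator of $\calG$, since $B\in\calC(\calG)\subseteq\calC(\calS)$ and $G\in\calG\subseteq\calC(\calS)$ while $B\notin\calG$ gives $P\notin\calG$. Because $\bar P$ and $\tau$ have disjoint supports, $|P|=|\bar P|+|\tau|\le|\bar P|+|\partial M|$, and $d\le|P|$ yields the desired bound.

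I expect the main obstacle to be the bookkeeping that confines the tail $\tau$ to $\partial M$: this is the one place where the range-$r$ locality of the generators is indispensable, and it requires care to verify that lifting a product of restricted generators produces a tail inside $\partial M$ and not spread over $\overline{M}$. A secondary point worth handling cleanly is the dichotomy in the statement, which I would phrase so that the vacuous alternative is automatic—when $\calG_M$ has no logical qubits there is no dressed logical $\bar P$ to extend, and case~(1) holds trivially, so the extension argument is only ever invoked in case~(2).
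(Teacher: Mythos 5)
Your proof is correct. Note that the paper does not actually prove this lemma --- it is quoted from Ref.~[BT08] (``our second technical tool is the Restriction Lemma proved in [BT08]'') --- so there is no in-paper argument to compare against; your write-up supplies the standard extension argument, and every step checks out. The two pillars are both sound: (i) since $\calG_M$ is exactly the image of $\calG$ under restriction to $M$, it is generated by the restrictions $G_a|_M$ of the range-$r$ generators whose supports meet $M$, and lifting a product of such restrictions produces a tail $\tau$ confined to $\partial M$; and (ii) the phase-free symplectic identity $\calC(\calS')=\bigl(\calC(\calG)\cap\calP(M)\bigr)\cdot\calG_M$ inside $\calP(M)$, which uses $\calC(\calG_M)\cap\calP(M)=\calC(\calG)\cap\calP(M)$ and lets you peel off a factor $B\in\calC(\calG)\setminus\calG$ that is a genuine bare logical operator of the full code (the inclusion $\calG\cap\calP(M)=\calG(M)\subseteq\calG_M$ is the right way to rule out $B\in\calG$). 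The operator $P=\overline{P}\,\tau=BG$ is then a nontrivial dressed logical operator of $\calG$ of weight $|\overline{P}|+|\tau|\le|\overline{P}|+|\partial M|$, which gives the claimed distance bound, and the dichotomy with case (1) is handled correctly as the vacuous alternative. The only point worth a remark is the metric bookkeeping: a generator supported in an $r\times r$ box meeting $M$ has its residual support within distance of order $r$ of $M$, which matches the paper's definition of $\partial M$ only up to the (irrelevant) constant implicit in ``distance $r$''; the paper is equally loose about this constant in its own applications of the lemma.
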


\section{Holographic principle for error correction}
\label{sec:holo}

Let $\calG$ be the gauge group of some subsystem $[n,k,d]$ code. We shall fix some choice
of spatially local generators of $\calG$ such that each generator
has support in a box of size $r\times r$ for some interaction range $r=O(1)$.
By definition of the distance $d$, no subset $M\subseteq \Lambda$ of less than $d$
qubits can support a dressed logical operator, that is, $l(M)=0$ whenever $|M|<d$.
Accordingly, one can choose $R$  proportional to $\sqrt{d}$ such that no square box of size $R\times R$ supports
 a dressed  logical operator.
In this section we derive a much stronger condition
which  can be regarded as an analogue of the famous holographic principle.
\begin{lemma}
\label{lemma:Hmain}
One can choose $R=\Omega(d)$ such that no square box of size $R\times R$ supports a dressed  logical operator.
\end{lemma}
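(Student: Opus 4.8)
The plan is to improve the trivial area bound $R=\Omega(\sqrt d)$ recorded just above the statement — a box of fewer than $d$ qubits supports no dressed logical operator — to $R=\Omega(d)$ by exploiting a perimeter (holographic) law. The essential idea is that the Restriction Lemma (Lemma~\ref{lemma:restrict}) should be applied not to the area-filling box but to a thin concentric shell of width $O(r)$, whose qubit count grows like the perimeter $O(R)$ rather than the area $O(R^2)$. If such a thin shell $N$ can be forced to carry a nontrivial logical operator of the restricted code $\calG_N$, then Lemma~\ref{lemma:restrict} yields a logical operator of weight at least $d-|\partial N|$ supported on a set with $|N|=O(R)$ and $|\partial N|=O(R)$; combining $d-O(R)\le |N|=O(R)$ gives $R=\Omega(d)$, and contrapositively no box of side $R$ below a suitable multiple of $d$ can support a dressed logical operator.

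First I would set up a nested family of concentric boxes $B=B_0\subset B_1\subset\cdots\subset B_t$, with $B_{j}$ obtained from $B_{j-1}$ by a frame of width $\approx r$, and annuli $N_j=B_j\setminus B_{j-1}$. Using Lemma~\ref{lemma:cleaning} I would track the quantity $l_{\bare}(\overline{B_j})=2k-l(B_j)$, the number of independent bare logical operators that can be cleaned into the exterior of $B_j$. The assumption that $B=B_0$ supports a dressed logical operator gives $l(B_0)\ge 1$, hence $l_{\bare}(\overline{B_0})\le 2k-1$: at least one bare logical class cannot be represented outside $B_0$. Since this count is non-increasing along the shrinking complements $\overline{B_0}\supseteq\overline{B_1}\supseteq\cdots$ and bounded below, the plan is to locate an index $j$ across which a logical ``direction'' becomes confined, and to use the anticommutation of this bare class with its conjugate partner to force a nontrivial logical operator onto the thin annulus $N_j$. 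Working with bare rather than dressed operators here is deliberate, since only bare operators have well-defined commutation relations with every logical class.

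The hard part will be exactly this localization step, and two difficulties must be overcome. Geometrically, a bare logical class that is supportable in $\overline{B_{j-1}}$ but not in $\overline{B_j}$ a priori uses the annulus $N_j$ \emph{together with} the entire exterior $\overline{B_j}$, not the thin annulus alone; compressing its support (or that of its conjugate) onto a shell of bounded width, so that the Restriction Lemma sees a perimeter-sized region, is the crux. Structurally, there is a genuine subsystem-code subtlety: Lemma~\ref{lemma:restrict} controls the distance of $\calG_N$, i.e. the minimum weight of a \emph{dressed} logical operator of the restricted code, whereas Lemma~\ref{lemma:cleaning} naturally produces statements about \emph{bare} operators, and a dressed operator may anticommute with gauge generators. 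Reconciling these two notions — ensuring that the object forced into the shell by cleaning is actually detected by the restricted-code distance — is precisely where the stabilizer argument of~\cite{Bravyi09} fails to generalize and where the new machinery is required. Once a thin shell of perimeter $O(R)$ is shown to support a nontrivial restricted-code logical operator, the Restriction Lemma closes the argument and delivers $R=\Omega(d)$.
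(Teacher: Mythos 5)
Your proposal correctly identifies the two tools (Lemma~\ref{lemma:cleaning} and the Restriction Lemma), the perimeter-versus-area heuristic, and even the bare-versus-dressed subtlety, but it leaves the central step unproven, and you say so yourself: the ``localization step'' that would force a nontrivial logical operator onto a thin annulus $N_j$ is exactly the crux, and your sketch gives no mechanism for it. Indeed it is not clear that such a localization is possible at all: a bare logical class that is representable in $\overline{B_{j-1}}$ but not in $\overline{B_j}$ a priori needs all of $N_j\cup\overline{B_j}$, and monotonicity of $l_{\bare}(\overline{B_j})$ only tells you \emph{where} the count drops, not that anything lives on the shell where it drops. As written, the argument does not close.

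The paper's proof sidesteps this entirely by inverting the logic. Instead of trying to localize a logical operator onto a shell, it proves the auxiliary statement (Lemma~\ref{lemma:Haux}): if $l(A)=0$ and $|B|+|\partial\overline{A}|<d$ for a disjoint set $B$, then $l(AB)=0$. One then grows the box from the inside out, starting from a sub-box $A_1$ with $|A_1|<d$ (where the trivial area bound gives $l(A_1)=0$) and adding perimeter-sized shells $B_i$ one at a time; since $rR\ll d$ each step satisfies the hypothesis, and induction gives $l(M)=0$ for the full $R\times R$ box. The proof of the auxiliary lemma is where your two difficulties are actually resolved, and the resolution is not a localization: one cleans all $2k$ bare logical operators into $BC=\overline{A}$ (Lemma~\ref{lemma:cleaning}), passes to the restricted code $\calG_{BC}$, applies the Restriction Lemma to \emph{that} code to get distance at least $d-|\partial\overline{A}|>|B|$ — so the thin shell $B$ is simply too small to support any dressed logical operator of the restricted code — then cleans again inside $\calG_{BC}$ to push all bare logical operators into $C=\overline{AB}$, and finally applies Lemma~\ref{lemma:cleaning} once more in the original code to conclude $l(AB)=0$. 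So the Restriction Lemma is applied to the (large) complement of the growing box, with the perimeter entering only through $|\partial\overline{A}|$, rather than to a thin annulus; the shell is ruled out by a distance bound, never carrying a logical operator. You would need to either supply the missing localization argument or restructure your proof along these lines.
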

Loosely speaking, the lemma asserts that a non-trivial dressed logical operator cannot be supported on a region
whose {\em perimeter} is smaller than the distance $d$ (with some constant coefficient depending on $r$)
even if the number of qubits in the interior of the region is much larger than $d$.
A similar result was obtained in~\cite{Bravyi09} for 2D stabilizer codes.
We shall begin by proving an auxiliary lemma.
\begin{lemma}
\label{lemma:Haux}
Let $A,B$ be any disjoint subsets of qubits such that $l(A)=0$
and $|B|+|\partial \overline{A}|<d$. Then $l(AB)=0$.
\end{lemma}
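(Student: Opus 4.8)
The plan is to turn the hypothesis $l(A)=0$ into a statement about the \emph{restricted} code $\calG_{\overline A}$, use the Restriction Lemma (Lemma~\ref{lemma:restrict}) to guarantee that this restricted code has large distance, and then clean all logical operators off the small set $B$ via a second application of the Cleaning Lemma (Lemma~\ref{lemma:cleaning}). Throughout I may assume $k\ge 1$, since for $k=0$ there are no non-trivial logical operators and $l(\cdot)\equiv 0$, so the statement is vacuous. Note also that $A$ and $B$ disjoint gives $B\subseteq\overline A$ and $\overline{AB}=\overline A\setminus B$.

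First I would invoke Lemma~\ref{lemma:cleaning} with $M=A$: from $l(A)=0$ it yields $l_{\bare}(\overline A)=2k$, so all $2k$ independent bare logical operators of the full code can be chosen with support inside $\overline A$. I would then argue that these descend to $k$ genuine logical qubits of $\calG_{\overline A}$. The delicate point here — and the step I expect to be the main obstacle — is to rule out that a full-code bare logical operator $P$ supported in $\overline A$ degenerates into a \emph{gauge} operator of $\calG_{\overline A}$. If $P\in\calG_{\overline A}$ then some $G\in\calG$ extends $P$, and the product $PG$ is supported entirely in $A$, commutes with $\calS$, and lies outside $\calG$; that is, $PG$ would be a non-trivial dressed logical operator supported in $A$, contradicting $l(A)=0$. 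The same reasoning shows that the stabilizer group $\calG_{\overline A}\cap\calC(\calG_{\overline A})$ of the restricted code coincides with the full-code stabilizers supported in $\overline A$. Hence $\calG_{\overline A}$ has exactly $k$ logical qubits, and in particular it \emph{has} logical qubits.

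Next I would feed this into the Restriction Lemma applied to $\calG$ with $M=\overline A$. Since $\calG_{\overline A}$ has logical qubits, alternative (1) of that lemma is excluded, so the distance of $\calG_{\overline A}$ is at least $d-|\partial\overline A|$. The hypothesis $|B|+|\partial\overline A|<d$ gives $|B|<d-|\partial\overline A|$, so $B$ contains fewer qubits than the distance of $\calG_{\overline A}$ and therefore supports no non-trivial dressed logical operator of $\calG_{\overline A}$.

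Finally I would apply Lemma~\ref{lemma:cleaning} a second time, now to the code $\calG_{\overline A}$ (whose qubit set is $\overline A$) with the subset $B\subseteq\overline A$. Because $B$ supports no dressed logical operator of $\calG_{\overline A}$, the cleaning identity for this code forces all $2k$ of its bare logical operators to be representable with support inside $\overline A\setminus B=\overline{AB}$. Identifying the bare logical operators of $\calG_{\overline A}$ supported in $\overline{AB}$ with those of the full code — legitimate because $\calC(\calG_{\overline A})\cap\calP(\overline A)=\calC(\calG)\cap\calP(\overline A)$ and, by the second paragraph, the two stabilizer groups agree on $\overline A$ — this yields $l_{\bare}(\overline{AB})=2k$. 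One last application of Lemma~\ref{lemma:cleaning}, with $M=AB$, then gives $l(AB)=2k-l_{\bare}(\overline{AB})=0$, as required.
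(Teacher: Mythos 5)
Your proof is correct and follows essentially the same route as the paper: clean the bare logical operators into $\overline A$, show they survive as logical operators of the restricted code $\calG_{\overline A}$, invoke the Restriction Lemma to bound its distance below by $d-|\partial\overline A|>|B|$, clean off $B$ inside the restricted code, and finish with one more application of Lemma~\ref{lemma:cleaning}. The only cosmetic differences are that the paper rules out degeneration of a bare logical operator into $\calG_{\overline A}$ via the anticommutation of conjugate pairs rather than your (equally valid) argument producing a dressed logical operator on $A$, and your final step should read ``with $M=\overline{AB}$'' --- the identity you display, $l(AB)=2k-l_{\bare}(\overline{AB})$, is indeed the cleaning identity for that choice of $M$.
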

\begin{proof}
Let $C$ be the complement of $AB$ such that $\Lambda=ABC$.
Applying Lemma~\ref{lemma:cleaning} to the subset $A$ we conclude that
$l_{\bare}(BC)=2k$. Hence we can choose a complete set of
$2k$ bare logical operators  $\overline{X}_1,\overline{Z}_1,\ldots,\overline{X}_k,\overline{Z}_k\in \calC(\calG)\backslash \calG$
supported inside $BC$. Consider a subsystem code specified by the gauge group $\calG_{BC}$
that involves only qubits of $BC$.
We claim that $\calG_{BC}$ has $k$ logical qubits and $\overline{X}_a,\overline{Z}_a$ are the bare logical operators
of $\calG_{BC}$. Indeed, since $\overline{X}_a,\overline{Z}_a$ have support on $BC$ and commute with $\calG$
we infer that $\overline{X}_a,\overline{Z}_a\in \calC(\calG_{BC})$ for all $a$. Accordingly,  $\overline{X}_a,\overline{Z}_a\notin \calG_{BC}$
since $\overline{X}_a\overline{Z}_a=-\overline{Z}_a\overline{X}_a$ for all $a$. It shows that $\calG_{BC}$ has at least $k$ logical qubits.
Conversely, let $\overline{P}\in \calC(\calG_{BC})\backslash \calG_{BC}$ be any bare logical operator of $\calG_{BC}$.
By definition of $\calG_{BC}$ it implies that $\overline{P}\in \calC(\calG)$ and $\overline{P}\notin \calG$, that is,
$\overline{P}$ must be a bare logical operator of the original code $\calG$ in which case it can be expressed in terms
of $\overline{X}_a,\overline{Z}_a$. We have shown that $\calG_{BC}$ has $k$ logical qubits and $\overline{X}_a,\overline{Z}_a$ are the bare logical operators of $\calG_{BC}$.
Applying the Restriction Lemma to the code $\calG_{BC}$ we conclude that $\calG_{BC}$
has distance $d'\ge d-|\partial (BC)|=d-|\partial \overline{A}|$. The assumptions of the lemma then imply $d'>|B|$.
Thus the code $\calG_{BC}$ has no dressed logical operators supported inside $B$, that is, $l'(B)=0$
(here and in the rest of the proof all quantities labeled by a prime refer to the code $\calG_{BC}$).
Applying Lemma~\ref{lemma:cleaning} to the code $\calG_{BC}$ and the subset $B$ we infer that
$l'_{\bare}(C)=2k$. Hence we can choose a complete set of bare logical operators
$\overline{X}_a',\overline{Z}_a'\in \calC(\calG_{BC})\backslash \calG_{BC}$ supported on $C$.
But then $\overline{X}_a',\overline{Z}_a'$ are also bare logical operators of the original code $\calG$
which implies $l_{\bare}(C)=2k$ for the code $\calG$. Applying Lemma~\ref{lemma:cleaning}
to the code $\calG$ and the subset $C$
we arrive at $l(AB)=0$.
\end{proof}
Now we are ready to prove Lemma~\ref{lemma:Hmain}.
\begin{proof}
Choose any $R$ such that $rR\ll d$.
For any square box $M$ of size $R\times R$
consider a sequence of square boxes
$A_1\subset A_2 \subset \ldots \subset A_p=M$
such that $A_1$ has cardinality $|A_1|<d$ and $A_{i+1}$ is the smallest box that contains $A_i$ and
the boundary of $A_i$.  Let $B_i=A_{i+1}\backslash A_i$. Then
$|B_i|+|\partial \overline{A_i}|\le O(1)rR<d$ for all $i=1,\ldots,p$. Since $|A_1|<d$ we have $l(A_1)=0$.
Applying Lemma~\ref{lemma:Haux} inductively with $A\equiv A_i$ and $B\equiv B_i$ we arrive at
$l(A_p)=l(M)=0$.
\end{proof}

\section{Proof of the upper bound}
\label{sec:proof}

Now we are ready to prove the upper bound Eq.~(\ref{bound2}).
By assumption,  the support of any generator of the gauge group $\calG$ can be
covered by a square block of size $r\times r$ for some interaction range $r=O(1)$.
Consider a partition of the lattice $\Lambda=AB$ shown on Fig.~\ref{fig:weak}.
The region $A$ consists of square blocks $A_1,\ldots, A_m$ of size $R\times R$
with $R=\Omega(d)$ such that $l(A_i)=0$, see Lemma~\ref{lemma:Hmain}.
We choose the separation between adjacent blocks in $A$ at least $r$ such that
any generator of $\calG$ overlaps with at most block in $A$.
We claim that
\be
\label{Abare}
l_{\bare}(A)=0.
\ee
Indeed, suppose $P\in \calC(\calG)\backslash \calG$ is a bare logical operator supported on $A$.
Let $P_i$ be the restriction of $P$ onto a block $A_i$ such that $P=P_1P_2\cdots P_m$.
Since any generator of $\calG$ overlaps with at most one block in $A$, we have
$P_i\in \calC(\calG)$ for all $i$. However there must exist at least one block $A_i$ such that
$P_i\notin \calG$ since otherwise $P\in \calG$. Then for such a block we have
$P_i\in \calC(\calG)\backslash \calG$, that is, $P_i$ is a bare logical operator supported inside $A_i$.
But this implies $l(A_i)\ge l_{\bare}(A_i)>0$ which is a contradiction. It proves Eq.~(\ref{Abare}).
Applying Lemma~\ref{lemma:cleaning} we get
\[
l(B)=2k.
\]
However, a subset $B$ can support at most $2|B|$ independent Pauli operators
which implies $2|B|\ge l(B)$, that is, $|B|\ge k$. Simple algebra shows that
$|B|=O(n/R)=O(n/d)$ and thus $kd=O(n)$.

Let us now prove the stronger bound Eq.~(\ref{bound1}) assuming that both $\calG$ and $\calS$ have
spatially local generators with a constant interaction range $r$ and $r_s$ respectively.
Consider a partition of the lattice $\Lambda=ABC$ shown on Fig.~\ref{fig:strong}.
The regions $A$, $B$ consist of blocks $A_1,\ldots, A_m$
and $B_1,\ldots,B_m$ respectively
of size  $R\times R$ with $R=\Omega(d)$ such that $l(A_i)=0$ and $l(B_i)=0$,
see Lemma~\ref{lemma:Hmain}.
The region $C$ consists of disks of radius $\max{\{r,r_s\}}$ so that
adjacent blocks in $A$ and adjacent blocks in $B$
are separated from each other by distance $\max{\{r,r_s\}}$.
Then we can choose generators in $\calG$ and $\calS$ such that any generator
overlaps with at most one block in $A$ and with at most one block in $B$.
Applying the same arguments as above we get $l_{\bare}(A)=0$ and thus
Lemma~\ref{lemma:cleaning} implies
\[
l(BC)=2k.
\]
Let us assume that
\be
\label{assume}
|C|<k
\ee
and show that it leads to a contradiction.
Indeed, choose any set of $2k$ independent dressed logical operators $P_1,\ldots,P_{2k}\in \calC(\calS)\backslash \calG$
supported inside $BC$ and let $\calQ=\la P_1,\ldots,P_{2k}\ra$. Applying Fact~\ref{fact:1}
to region $C$ and the group $\calQ$ we conclude that
there exists at least one non-trivial dressed logical operator $P\in \calC(\calS)\backslash \calG$ supported only inside $B$.
Let $P_i$ be the restriction of $P$ onto a block $B_i$
such that $P=P_1P_2\cdots P_m$. Since any generator of $\calS$
overlaps with at most one block in $B$ we conclude that $P_i \in \calC(\calS)$. However,
there must exist at least one block $B_i$ such that $P_i\notin \calG$ since otherwise $P\in \calG$.
Then $P_i$ is a non-trivial dressed logical operator, that is, $l(B_i)>0$ which is a contradiction.
Hence Eq.~(\ref{assume}) is impossible and we have $|C|\ge k$.
Simple algebra shows that $|C|=O(n/R^2)=O(n/d^2)$ which yields $kd^2=O(n)$.

\begin{figure}[htb]
\centerline{\includegraphics[height=3cm]{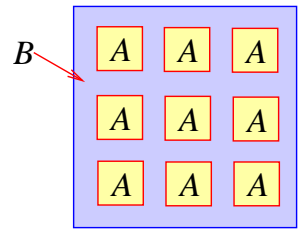}}
\caption{The partition of the lattice $\Lambda=AB$ used to prove the bound
$kd=O(n)$. The region $A$ consists of square blocks $A_1,\ldots, A_m$ of size $R\times R$
with $R=\Omega(d)$ such that $l(A_i)=0$. Adjacent blocks are separated from each other by distance $r=O(1)$.
It implies $l_{\bare}(A)=0$ and thus $l(B)=2k$. This is possible only if $|B|\ge k$ which yields
$kd=O(n)$.}
\label{fig:weak}
\end{figure}

\begin{figure}[htb]
\centerline{\includegraphics[height=3cm]{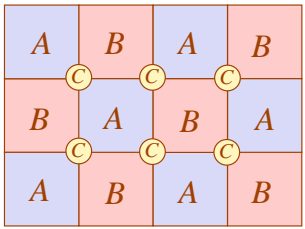}}
\caption{The partition of the lattice $\Lambda=ABC$ used to prove the bound
$kd^2=O(n)$. The regions $A$, $B$ consist of blocks $A_1,\ldots, A_m$
and $B_1,\ldots,B_m$ respectively
of size  $R\times R$ with $R=\Omega(d)$ such that $l(A_i)=0$ and $l(B_i)=0$.
The region $C$ consists of disks of radius $\max{\{r,r_s\}}$ so that
adjacent blocks in $A$ and adjacent blocks in $B$
are separated from each other by distance $\max{\{r,r_s\}}$.
It implies $l_{\bare}(A)=0$ and thus $l(BC)=2k$. If $k$ violates the upper bound one
would have $l(B)\ge l(BC)-2|C|>0$. Assuming that $\calS$ has spatially local generators
this is possible only if $l(B_i)>0$ for some block $B_i$ which is a contradiction.
 }
\label{fig:strong}
\end{figure}

\section{Conclusions and open problems}
\label{sec:concl}
In this paper we have studied subsystem codes for which the gauge group has spatially
local generators in the 2D geometry. It was shown that the parameters $[n,k,d]$ of such codes
must  obey an upper bound
$kd=O(n)$. We have also introduced a family of codes, the generalized Bacon-Shor codes, that achieves this bound with
both $k$ and $d$ proportional to $\sqrt{n}$.  The gauge group of the generalized Bacon-Shor codes involves only two-qubit
generators of type $XX$ and $ZZ$ coupling nearest-neighbor qubits (and some one-qubit generators).
It follows that the syndrome measurement for these codes requires only eigenvalue measurements for operators
$XX$ and $ZZ$ on nearest-neighbor qubits.  Our proof of existence presented in Sections~\ref{sec:GBS},\ref{sec:GV}
is not constructive since it requires binary matrices  achieving the Gilbert-Varshamov bound stated in Theorem~\ref{thm:GV}. On the other hand, one can easily show that a random $m\times m$ binary matrix $A$ with a fixed rank $k$ achieves the Gilbert-Varshamov bound of Theorem~\ref{thm:GV} with probability approaching one in the limit $m\to \infty$.
Therefore for finite sufficiently small lattice sizes 
one can  simply choose the desired  binary matrix $A$ randomly (with a fixed rank), compute the minimum distances
$\dcol$, $\drow$ and check whether the Gilbert-Varshamov bound is satisfied. 

A serious drawback of the standard 2D Bacon-Shor code~\cite{Bacon06} that  precludes it from
being used in the topological quantum computation schemes is the lack of 
a constant error threshold in the limit of large lattice size~\cite{Cross07,Pastawsky09}. We expect that the same
drawback is shared by the generalized Bacon-Shor codes introduced in the paper. 
It is therefore an interesting open problem whether it is possible to construct 2D subsystem codes
with both $k$ and $d$ proportional to $\sqrt{n}$ which would have a good behavior under random uncorrelated errors.

Finally, let us point out that our construction of the generalized Bacon-Shor codes naturally extends to
3D Bacon-Shor codes~\cite{Bacon06}. In the 3D case the binary matrix $A$ should be replaced by a
three-dimensional binary array with qubits occupying cells with $A_{i,j,k}=1$. The corresponding gauge
group $\calG$ is generated by operators $XX$, $YY$, and $ZZ$ coupling pairs of qubits that differ only in
$x$, $y$, and $z$-coordinate respectively. For any choice of the array $A$ the resulting subsystem code can be transformed
into the spatially local form by introducing ancillary qubits and simulating every long-range generator by a chain of short-range 
generators as described in Section~\ref{sec:local}. Finding the optimal scaling of $d$ and $k$ for such
generalized 3D Bacon-Shor codes is an interesting open problem.

\section*{Acknowledgments}
The author would like to thank Graeme Smith and Barbara Terhal for useful discussions.
This work was partially supported by  DARPA QUEST program under contract number HR0011-09-C-0047.

\section*{Appendix A}
The purpose of this section is to prove the upper bound Eq.~(\ref{GBSbound1+}).
We shall also construct a family of codes that achieves this bound.

We will  show that  a tuple $[n,k,\drow,\dcol]$ can be realized by some binary matrix $A$
only if  the following quadratic optimization problem has feasible solutions:
\bea
r_x & \ge 0 & \quad \quad  \forall x\in \Sigma^k \label{opt1} \\
c_x& \ge 0 & \quad \quad \forall x\in \Sigma^k  \label{opt2} \\
\sum_{x\, : \, x\cdot y=1}\; r_x &\ge \drow&  \quad \quad \forall y\in \Sigma^k\backslash 0,  \label{opt3} \\
\sum_{x\, : \, x\cdot y=1}\; c_x &\ge \dcol&  \quad \quad \forall y\in \Sigma^k\backslash 0, \label{opt4} \\
\sum_{x,y\, : \, x\cdot y=1}\; r_x c_y &=n. \label{opt5}&
\eea
Here $r_x$ and $c_x$ are integer-valued variables labeled by binary strings $x\in \Sigma^k\equiv \{0,1\}^k$.
We used the notation $x\cdot y\equiv \sum_{i=1}^k x_i y_i \pmod{2}$ for the binary inner product.
Hence we can get a lower bound on $n$ by minimizing the quadratic function of $r_x,c_y$ defined in Eq.~(\ref{opt5})
subject to constraints Eqs.~(\ref{opt1}-\ref{opt4}).

Let us begin by deriving analogous optimization problem corresponding to ordinary classical codes.
Let $G$ be the generating matrix of some classical $[n,k,d]$ code, such that $G$ has size $k\times n$
and the rows of $G$ form the basis of the codespace. For any binary string $x=[x_1,\ldots,x_k]\in \Sigma^k$ let
$n_x$ be the number of columns $[x_1,\ldots,x_k]^T$ in the matrix $G$.
Any codeword can be represented as $y G$ for some binary string $y\in \Sigma^k$. One can easily check that
the Hamming weight of $y G$ can be expressed as
\[
|y G|=\sum_{x\, : \, x\cdot y=1}\, n_x,
\]
where the summation is over binary strings $x\in \Sigma^k$.
Hence a tuple $[n,k,d]$ can be realized by some code iff the following optimization problem has feasible solutions:
\bea
n_x & \ge 0 & \quad \quad  \forall x\in \Sigma^k \label{opt1c} \\
\sum_{x\, : \, x\cdot y=1}\; n_x &\ge d&  \quad \quad \forall y\in \Sigma^k\backslash 0,  \label{opt2c} \\
\sum_{x\in\Sigma^k}\, n_x=n. \label{opt3c}
\eea
Here we treat $n_x$ as integer-valued variables. Indeed, any solution $\{n_x\}$ can be transformed
into a generating matrix $G$ (defined uniquely up to permutation of columns) of size $k\times n$.
Then Eq.~(\ref{opt3c}) implies that $G$ represents a $[n,k,d]$ classical code.

Now consider an arbitrary binary matrix $A$ of rank $k$. Without loss of generality
the first $k$ rows and the first $k$ columns of $A$ are linearly independent (otherwise
permute rows or columns). Let $\Grow$ be the generating matrix of a classical
code spanned by the first $k$ rows of $A$. Similarly, let $\Gcol$ be the generating matrix of a classical
code spanned by the first $k$ rows of $A^T$. (Note that $\Grow$ and $\Gcol$ may have different
length if $A$ is not a square matrix.) Let $c_x$ be the number of columns $x=[x_1,\ldots,x_k]^T$ in $\Gcol$.
Let $r_x$ be the number of columns  $x=[x_1,\ldots,x_k]^T$ in $\Grow$.
The variables $c_x,r_x$ must obey inequalities analogous to Eq.~(\ref{opt1c},\ref{opt2c}) since we assumed
that $\Gcol$ and $\Grow$ have distance $\dcol$ and $\drow$ respectively.
It yields Eqs.~(\ref{opt1}-\ref{opt4}). It remains to derive Eq.~(\ref{opt5}).
Consider any row of $A$ that starts with $x=[x_1,\ldots,x_k]$. It can be represented
as $z \Grow$ for some $z=z(x)\in \Sigma^k$ since by assumption any row of $A$ is a linear combination
of the first $k$ rows. Moreover, the function $z(x)$ must be linear and invertible, that is,
$z(x)=xM$ for some $k\times k$ invertible matrix $M$.
As before, the Hamming weight of $z\Grow$ can be expressed as
\[
|z\Grow|=\sum_{y\, : \, y\cdot z=1}\, r_y,
\]
where the summation is over binary strings $y\in \Sigma^k$.
Since the number of rows in $A$ that start from $x=[x_1,\ldots,x_k]$ is equal to $c_x$ we arrive at
\[
n=|A|=\sum_{x\in \Sigma^k} c_x \, |z(x) \Grow|=\sum_{x,y\, : \, xM\cdot y=1}\, c_x r_y.
\]
Here $xM\cdot y$ is the inner product between binary strings $xM$ and $y$.
Since Eqs.~(\ref{opt1}-\ref{opt4}) are invariant under a change of variables
$c_x\to c_{xM}$ for any invertible matrix $M$, we get Eq.~(\ref{opt5}).

Now we can easily prove Eq.~(\ref{GBSbound1+}).
Let $r=\sum_{x\ne 0} r_x$. Adding up Eq.~(\ref{opt3}) for all $y\ne 0$ we
count each $r_x$ exactly $2^{k-1}$ times, that is, we get
\[
r\ge \drow (2^k-1)2^{1-k}=\drow(2-2^{1-k}).
\]
Then combining  Eqs.~(\ref{opt4},\ref{opt5}) we get
$n\ge \dcol r$ which is equivalent to Eq.~(\ref{GBSbound1+}).

Let us show that Eq.~(\ref{GBSbound1+}) is tight. Choose any integer $k\ge 1$ and define a  matrix
$A$ of size $(2^k-1)\times (2^k-1)$ as a binary version of the Hadamard matrix,
\[
A_{x,y}=x\cdot y
\]
with $x,y\in \Sigma^k\backslash 0$. Obviously, $A$ is a symmetric matrix. Its row-space and its
column-space coincide with the Hamming code $[2^k-1,k,2^{k-1}]$.
In particular,
any row and any column of $A$ have weight $2^{k-1}$. Thus we get
$\drow=\dcol=2^{k-1}$ and $n=(2^k-1)2^{k-1}$. It achieves the bound Eq.~(\ref{GBSbound1+}).

\bibliographystyle{hunsrt}

\begin{thebibliography}{10}

\bibitem{Dennis01}
E.~{Dennis}, A.~{Kitaev}, A.~{Landahl}, and J.~{Preskill}.
\newblock {Topological quantum memory}.
\newblock {\em J. Math. Phys.}, 43:4452--4505, 2002.

\bibitem{Andrist10}
R.~S. {Andrist}, H.~G. {Katzgraber}, H.~{Bombin}, and M.~A. {Martin-Delgado}.
\newblock {Tricolored Lattice Gauge Theory with Randomness: Fault-Tolerance in
  Topological Color Codes}.
\newblock 2010, arXiv:1005.0777.

\bibitem{BMD:topo}
H.~Bombin and M.~A. Martin-Delgado.
\newblock Topological quantum distillation.
\newblock {\em Phys. Rev. Lett.}, 97, 2006.

\bibitem{KBAM:unionj}
H.~G. Katzgraber, H.~Bombin, R.~S. Andrist, and M.~A. Martin-Delgado.
\newblock Topological color codes on union jack lattices: A stable
  implementation of the whole clifford group.
\newblock {\em Phys. Rev. A}, 81:012319, 2010.

\bibitem{Raussendorf06}
R.~{Raussendorf} and J.~{Harrington}.
\newblock {Fault-Tolerant Quantum Computation with High Threshold in Two
  Dimensions}.
\newblock {\em Phys. Rev. Lett.}, 98(19):190504, 2007.

\bibitem{Raussendorf07}
R.~{Raussendorf}, J.~{Harrington}, and K.~{Goyal}.
\newblock {Topological fault-tolerance in cluster state quantum computation}.
\newblock {\em New J. Phys.}, 9:199, 2007.

\bibitem{Bombin09+}
H.~{Bombin} and M.~A. {Martin-Delgado}.
\newblock {Quantum Measurements and Gates by Code Deformation}.
\newblock {\em Jour. of Phys. A.}, 42:095302, 2009.

\bibitem{DiVin09}
D.~P. {DiVincenzo}.
\newblock {Fault-tolerant architectures for superconducting qubits}.
\newblock {\em Physica Scripta Volume T}, 137(1):014020, 2009.

\bibitem{BT08}
S.~Bravyi and B.~M. Terhal.
\newblock {A no-go theorem for a two-dimensional self-correcting quantum memory
  based on stabilizer codes}.
\newblock {\em New J. Phys.}, 11:043029, 2009.

\bibitem{KC08}
A.~Kay and R.~Colbeck.
\newblock {Quantum Self-Correcting Stabilizer Codes}.
\newblock 2008, arXiv:0810.3557.

\bibitem{Bravyi09}
S.~Bravyi, D.~Poulin, and B.~M. Terhal.
\newblock Tradeoffs for reliable quantum information storage in 2{D} systems.
\newblock {\em Phys. Rev. Lett.}, 104:050503, 2010.

\bibitem{Yoshida10}
B.~{Yoshida} and I.~L. {Chuang}.
\newblock {Framework for classifying logical operators in stabilizer codes}.
\newblock {\em Phys. Rev. A}, 81(5):052302, 2010.

\bibitem{Bacon06}
D.~{Bacon}.
\newblock {Operator quantum error-correcting subsystems for self-correcting
  quantum memories}.
\newblock {\em Phys. Rev. A}, 73(1):012340, 2006.

\bibitem{Poulin06}
D.~{Poulin}.
\newblock {Stabilizer Formalism for Operator Quantum Error Correction}.
\newblock {\em Phys. Rev. Lett.}, 95(23):230504, 2005.

\bibitem{Aliferis07}
P.~{Aliferis} and A.~W. {Cross}.
\newblock {Subsystem Fault Tolerance with the Bacon-Shor Code}.
\newblock {\em Phys. Rev. Lett.}, 98(22):220502, 2007.

\bibitem{Cross07}
A.~W. {Cross}, D.~P. {DiVincenzo}, and B.~M. {Terhal}.
\newblock {A comparative code study for quantum fault-tolerance}.
\newblock 2007, arXiv:0711.1556.

\bibitem{Bombin09}
H.~{Bombin}.
\newblock {Topological subsystem codes}.
\newblock {\em Phys. Rev. A}, 81(3):032301, 2010.

\bibitem{Bombin10}
H.~Bombin.
\newblock Clifford gates by code deformation.
\newblock {\em arXiv:1006.5260}, 2010.

\bibitem{BC06}
D.~Bacon and A.~Casaccino.
\newblock {Quantum Error Correcting Subsystem Codes From Two Classical Linear
  Codes}.
\newblock 2006, arXiv:quant-ph/0610088.

\bibitem{CS95}
A.~Calderbank and P.~Shor.
\newblock {Good Quantum Error-Correcting Codes Exist}.
\newblock {\em Phys. Rev. A}, 54(2):1098, 1996.

\bibitem{CRSS97}
A.~R. {Calderbank}, E.~M. {Rains}, P.~W. {Shor}, and N.~J.~A. {Sloane}.
\newblock {Quantum Error Correction and Orthogonal Geometry}.
\newblock {\em Phys. Rev. Lett.}, 78:405, 1997.

\bibitem{Pastawsky09}
F.~{Pastawski}, A.~{Kay}, N.~{Schuch}, and I.~{Cirac}.
\newblock {Limitations of Passive Protection of Quantum Information}.
\newblock 2009, arXiv:0911.3843.

\end{thebibliography}

\end{document}